\author{Jonas Köhler}
\affiliation{Department of Mathematics and Computer Science, Freie Universität Berlin, Arnimallee 12, 14195 Berlin, Germany}
\altaffiliation{J.K., Y.C.~and A.K.~contributed equally to this work.}
\author{Yaoyi Chen}
\affiliation{Department of Mathematics and Computer Science, Freie Universität Berlin, Arnimallee 12, 14195 Berlin, Germany}
\altaffiliation{J.K., Y.C.~and A.K.~contributed equally to this work.}
\author{Andreas Krämer}
\email{andreas.kraemer@fu-berlin.de}
\affiliation{Department of Mathematics and Computer Science, Freie Universität Berlin, Arnimallee 12, 14195 Berlin, Germany}
\altaffiliation{J.K., Y.C.~and A.K.~contributed equally to this work.}
\author{Cecilia Clementi}
\email{cecilia.clementi@fu-berlin.de}
\affiliation{Department of Physics, Freie Universität Berlin, Arnimallee 12, 14195 Berlin, Germany}
\author{Frank Noé}
\email{frank.noe@fu-berlin.de}
\affiliation{Microsoft Research AI4Science, Karl-Liebknecht Str. 32, 10178 Berlin, Germany}
\title{Flow-matching -- efficient coarse-graining of molecular dynamics without forces \footnote{This is the preprint of a paper published on \emph{J. Chem. Theory Comput.}~(DOI:~\href{https://doi.org/10.1021/acs.jctc.3c00016}{10.1021/acs.jctc.3c00016}) and does not contain the editing and minor changes after submission.}}
\keywords{coarse-graining, force field, machine learning, generative model}
\begin{document}

%%%%%%%%%%%%%%%%%%%%%%%%%%%%%%%%%%%%%%%%%%%%%%%%%%%%%%%%%%%%%%%%%%%%%
%% The "tocentry" environment can be used to create an entry for the
%% graphical table of contents. It is given here as some journals
%% require that it is printed as part of the abstract page. It will
%% be automatically moved as appropriate.
%%%%%%%%%%%%%%%%%%%%%%%%%%%%%%%%%%%%%%%%%%%%%%%%%%%%%%%%%%%%%%%%%%%%%
\begin{tocentry}
\includegraphics{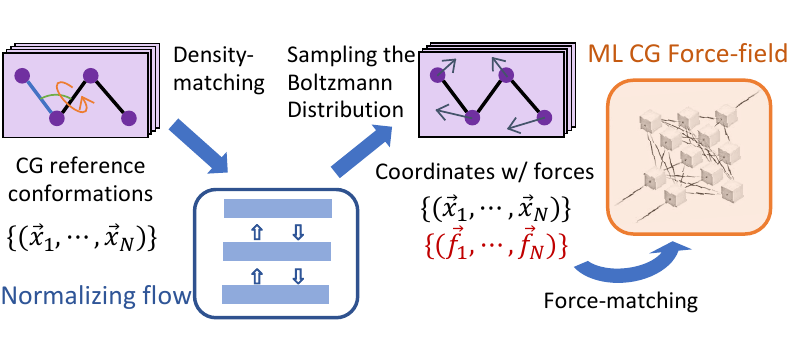}
\end{tocentry}

%%%%%%%%%%%%%%%%%%%%%%%%%%%%%%%%%%%%%%%%%%%%%%%%%%%%%%%%%%%%%%%%%%%%%
%% The abstract environment will automatically gobble the contents
%% if an abstract is not used by the target journal.
%%%%%%%%%%%%%%%%%%%%%%%%%%%%%%%%%%%%%%%%%%%%%%%%%%%%%%%%%%%%%%%%%%%%%
\begin{abstract}
Coarse-grained (CG) molecular simulations have become a standard tool to study molecular processes on time-~and length-scales inaccessible to all-atom simulations. 
Parameterizing CG force fields to match all-atom simulations has mainly relied on force-matching or relative entropy minimization, which require many samples from costly simulations with all-atom or CG resolutions, respectively.
Here we present \emph{flow-matching}, a new training method for CG force fields that combines the advantages of both methods by leveraging normalizing flows, a generative deep learning method.
Flow-matching first trains a normalizing flow to represent the CG probability density, which is equivalent to minimizing the relative entropy without requiring iterative CG simulations.
Subsequently, the flow generates samples and forces according to the learned distribution in order to train the desired CG free energy model via force-matching.
Even without requiring forces from the all-atom simulations, flow-matching outperforms classical force-matching by an order of magnitude in terms of data efficiency and produces CG models that can capture the folding and unfolding transitions of small proteins.
\end{abstract}

%%%%%%%%%%%%%%%%%%%%%%%%%%%%%%%%%%%%%%%%%%%%%%%%%%%%%%%%%%%%%%%%%%%%%
%% Start the main part of the manuscript here.
%%%%%%%%%%%%%%%%%%%%%%%%%%%%%%%%%%%%%%%%%%%%%%%%%%%%%%%%%%%%%%%%%%%%%
\section{Introduction}
Molecular dynamics (MD) simulations have become a major computational tool to study biophysical processes on molecular scales. Presently, MD simulations at all-atom resolution can reach multiple microseconds for small to medium-sized protein systems on retail hardware. By using special-purpose supercomputers~\cite{Shaw2014, Shaw2021} or combining distributed computing with Markov State Modeling~\cite{Prinz_JChemPhys2011, Husic_JACS2018} or enhanced sampling approaches, it is possible to probe millisecond-timescales and sometimes beyond \cite{Lindorff_Science2011, Plattner2017}.

Despite this progress, many biomolecular processes of interest exceed these time and length scales by orders of magnitude. Also, high-throughput simulations that would be needed, e.g., to screen protein sequences for high-affinity protein-protein interactions, cannot be efficiently done with all-atom MD.

A common approach to go to larger time-~and length-scales or high-throughput
simulations, is coarse-grained (CG) molecular dynamics~\cite{Clementi_JMolBiol2000,ClementiCOSB,Matysiak2004,Matysiak2006,das2005balancing,Saunders_AnnuRevBiophys2013,Noid_JChemPhys2013,Ingolfsson_WIREsComputMolSci2014,Kmiecik_ChemRev2016,Pak_CurrOpinStructBiol2018,Chen_JChemTheoryComput2018,Singh_IntJMolSci2019,Nuske_JChemPhys2019,wang2019machine,Wang_JChemPhys2020,husic2020coarse}.
In ``bottom-up'' coarse-graining \cite{jin2022bottom}, one defines a mapping from the all-atom representation to the CG model (e.g., by grouping sets of atoms to CG beads). The choice of mapping determines the resolution and has to suit the system as well as the scientific question, which is by itself a challenge~\cite{Noid_JChemPhys2013,Pak_CurrOpinStructBiol2018,boninsegna2018data,Wang2019}.
Given that the CG mapping is chosen, a frequently used CG principle is known as thermodynamic consistency in the coarse-graining literature and as density matching in machine learning:~the CG model should generate the same equilibrium distribution in the CG coordinates, as one would obtain from a fully converged all-atom simulation after applying the coarse-graining map to all simulation frames~\cite{Noid_JChemPhys2013}. 
In principle, the requirement of thermodynamic consistency uniquely defines the free energy function in the CG coordinates, which is also known as the \emph{potential of mean force} (PMF).
Ideally, if this thermodynamically consistent PMF were known, it could be used to compute exactly any equilibrium property expressible as an ensemble average over the CG coordinates.
Note that this definition does not guarantee that the CG model reproduces \emph{all} thermodynamic observables, counterexamples being heat capacity, pressure, and entropy~\cite{wagner2016representability, dunn2016van, jin2019understanding, dannenhoffer2019compatible, doi:10.1063/1.5125246}.
However, the PMF by definition involves high dimensional integrals that cannot be estimated for nontrivial systems in practice.
A pivotal challenge is to find a good approximation for the PMF with tractable functional forms to serve as the CG potential~\cite{Noid_JChemPhys2013}.

\begin{figure}[htb!]
    \centering
    \includegraphics[width=0.9\linewidth]{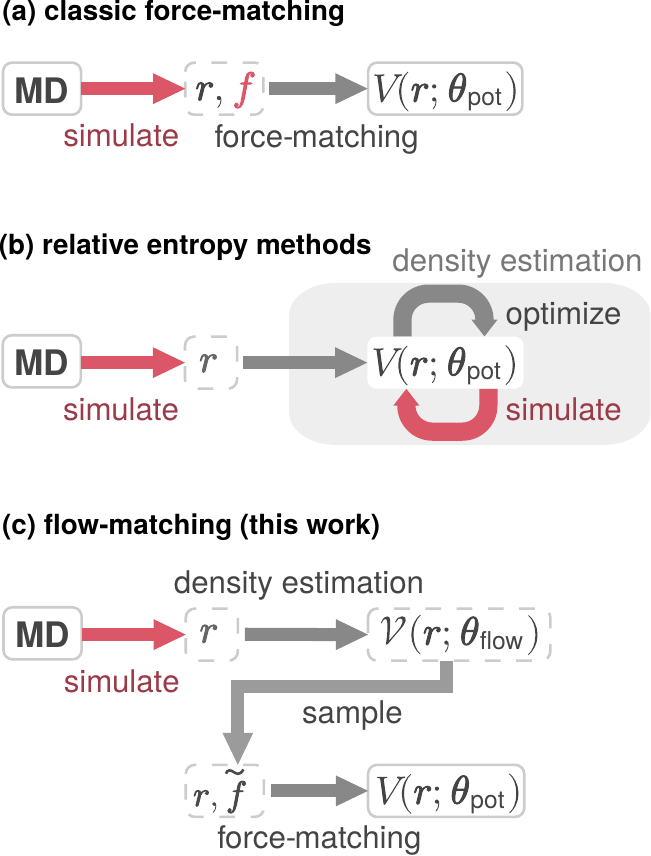}
    \caption{
    Overview of the flow-matching method. \textbf{a)} Classical force-matching:~the parameters $\bm\theta_{\mathrm{pot}}$ of a CG potential $V(\cdot; \bm\theta_{\mathrm{pot}})$ are optimized to minimize the mean-squared error of model forces with respect to projected atomistic forces $\bm{f}$ on the training configurations $\bm{r}$;  \textbf{b)} Relative entropy methods:~simulations are performed with the CG potential to produce samples and enable evaluating (and minimizing) the relative entropy;
    \textbf{c)} Present method:~the parameters $\bm\theta_{\mathrm{flow}}$ of a normalizing flow are first optimized to match the CG density from the ground-truth samples $\bm{r}$. This defines the flow-based potential $\mathcal{V}(\cdot; \bm\theta_{\mathrm{flow}})$. The samples and forces from the flow are used to train a CG potential $V(\cdot; \bm\theta_{\mathrm{pot}})$ via force-matching.
    Slow/inaccurate sampling steps are highlighted in red.
    }
    \label{fig:method-overview}
\end{figure}

Among the techniques for such bottom-up modeling~\cite{Reith2003,Izvekov_JPhysChemB2005,Noid_JChemPhys2008,Shell_JChemPhys2008, Noid_JChemPhys2013, Kmiecik_ChemRev2016}, two methods have been explicitly developed to approach thermodynamic consistency:~variational force-matching (also known as multi-scale coarse-graining)~\cite{Izvekov_JPhysChemB2005,Noid_JChemPhys2008} and relative entropy minimization~\cite{Shell_JChemPhys2008}.
Force-matching (Fig.~\ref{fig:method-overview}a) is straightforward to implement but requires the forces on the CG particles mapped from all-atom sampling. Because these instantaneous forces depend on all degrees of freedom, they provide a very noisy signal that makes training the CG force field data inefficient.
This approach has been connected with the blooming field of machine-learned potentials and led to several successes~\cite{wang2019machine,husic2020coarse,Wang_JChemPhys2020}.
Relative entropy minimization (Fig.~\ref{fig:method-overview}b), the Inverse Monte-Carlo method \cite{PhysRevE.52.3730}, as well as Iterative Boltzmann Inversion~\cite{Reith2003}, do not require forces to be recorded and are more data-efficient, but require the CG model to be re-simulated during the iterative training procedure, which can be extremely costly and even lead to failure in convergence.
Ref.~\citenum{Pak2019} developed a hybrid approach combining force-matching and relative entropy methods in order to parameterize CG models where not all particles have force information available.

This manuscript presents a third alternative---the \emph{flow-matching} method, which is shown to be more efficient. Our approach combines aspects of force-matching and relative entropy minimization with deep generative modeling. The centerpiece of this novel method is a \emph{normalizing flow}~\cite{tabak2010density, rezende2015variational, papamakarios2019normalizing}, a generative neural network that can approximate arbitrary probability distributions by transforming them into simple, easy-to-sample prior distributions. Once trained, normalizing flows can generate uncorrelated samples and compute normalized probability densities, energies, and forces, which makes them an exciting emerging tool for physical applications~\cite{noe2019boltzmann, Gabrie2022,Li2020,Nicoli2020,Liu2021,Ding2021,wirnsberger2020targeted,Ding2021}. For example, Boltzmann generators \cite{noe2019boltzmann} use flows that are trained on MD data and energies as one-shot importance samplers for molecular equilibrium distributions.
Other types of generative neural networks have also been used for back-mapping of CG structures~\cite{DBLP:conf/icml/WangXCMSW0G22, Stieffenhofer2020}.

Flow-matching applies normalizing flows to the coarse-graining problem. Like force-matching and relative entropy minimization, it starts from CG samples in equilibrium, which are usually created by mapping snapshots from an all-atom simulation to the CG space. In order to find a thermodynamically consistent CG potential, the method proceeds in two steps (Fig.~\ref{fig:method-overview}c). First, a normalizing flow is trained by density matching, such that it learns to sample directly from the target ensemble. Second, the CG equilibrium distribution that the flow has learned is taught to a CG force field by force-matching to coordinate-force pairs generated by the flow.

While this stepwise approach leans on the same underlying principles as previous coarse-graining methods, it avoids their key limitations. In contrast to force-matching (Fig.~\ref{fig:method-overview}a,c), it does not rely on atomistic reference forces.
Although saving forces during the simulation is in principle straightforward to do, in most cases of already existing long simulations, forces have not been stored and are often non-trivial to recompute \textit{a posteriori}.
To bypass the need for force data, an alternative method has been previously proposed as the generalized Yvon-Born-Green theory~\cite{Mullinax2009}, which determines a CG force field (usually as a sum of basis functions) directly according to structural correlations.
However, it is not clear whether this can be generalized to CG force fields based on neural networks. 

Additionally, the flow can generate an indefinite number of ``synthetic'' configurations and forces, which do not carry noise from the atomistic environment. %
In contrast to relative entropy minimization~\cite{Shell_JChemPhys2008} and iterative Boltzmann inversion~\cite{Reith2003}, flow-matching does not require repeated re-simulation of the CG model during training, as the flow can generate independent samples that represent the thermodynamic equilibrium (Fig.~\ref{fig:method-overview}b,c). In practice, by removing the need for costly simulations during training, flow-matching makes coarse-graining by density estimation/relative entropy methods feasible for molecules with rare events, such as biomolecules. In contrast to force-matching, density estimation does not suffer from the noise problem due to the omitted degrees of freedom, and consequently, flow-matching is significantly more data-efficient.

Using the flow only as an intermediate offers complete freedom in choosing the functional form of the final CG force field. In particular, the candidate potential can incorporate the desired physical symmetries and asymptotics \cite{wang2019machine,Wang_JChemPhys2020} as well as share parameters across chemical space \cite{husic2020coarse}. Conversely, directly using a normalizing flow as the CG force field would not be a good idea, because transferable properties cannot be easily incorporated into invertible \cite{tabak2010density, papamakarios2019normalizing} or at least statistically reversible \cite{wu2020snf} neural networks, which are required by the flows. For example, transferability across molecular systems of different sizes and topologies requires parameter sharing and a transformation of random variables of different dimensionality---features not yet supported by existing normalizing flows.
To this end, flow-matching combines the advantages of normalizing flows and energy-based models in a novel way. 
Flow-matching \textit{per se} does not enable transferability. However, it helps towards this goal by allowing the training of neural network force fields in a data-efficient way, thus significantly reducing the burden of generating extensive training data.

As a proof of concept, we apply the method to the coarse-graining of small protein systems. We show that accurate CG models can be fit to equilibrium trajectories without using atomistic forces or intermediate simulations. Even when forces are available, we find that flow-matching is much more data-efficient than force-matching and yields surprisingly accurate force fields on small data sets.

\section{Coarse-graining with Flow-matching}

\begin{figure*}[htb!]
    \centering
    \includegraphics[width=1.0\linewidth]{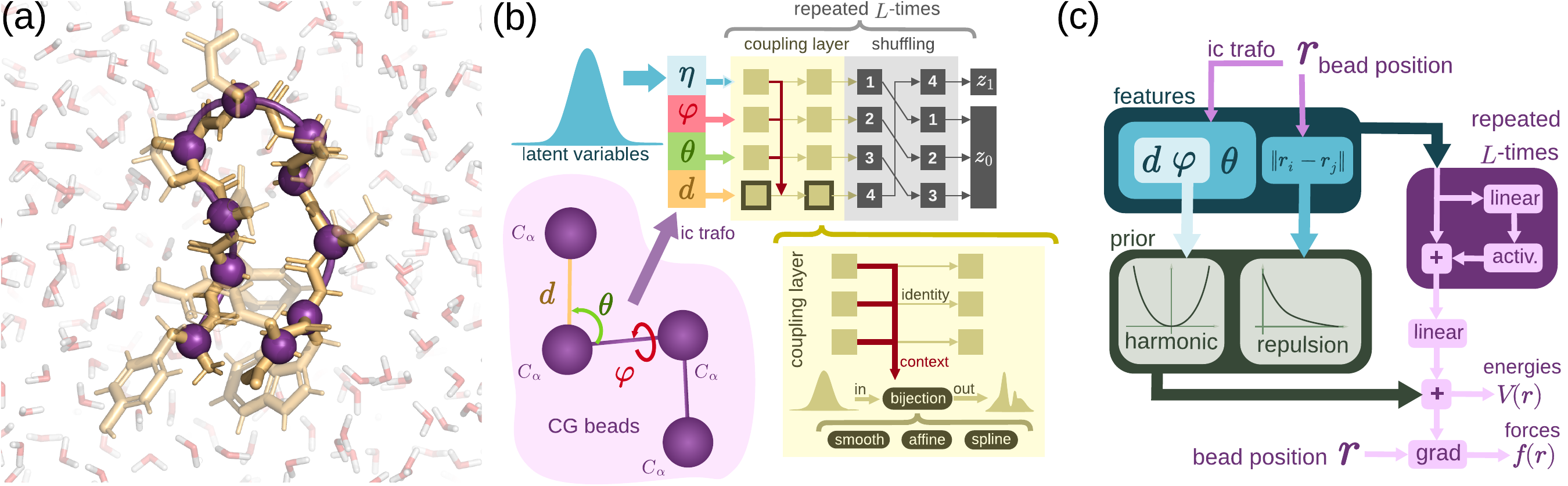}
    \caption{
    \textbf{(a)} Chignolin in explicit solvent. The magenta spheres show the CG beads at C$_\alpha$ resolution.
    \textbf{(b)} The normalizing flow architecture used in this work to model $\mathcal{V}(\cdot;\bm\theta_{\mathrm{flow}})$. After transforming the CG beads into an internal coordinate (IC) representation made from bonds ($d$), angles ($\theta$), and dihedral torsions ($\varphi$) a trainable stack of coupling layers transform them into uniform noise. See Fig.~S1 for a more detailed illustration of the flow architecture.
    \textbf{(c)} The modified \textit{CG-Net} architecture used in this work to model $\mathcal{V}(\cdot;\bm\theta_{\mathrm{pot}})$. ``grad'' stands for computing the gradient using automatic differentiation.}
    \label{fig:method-architecture}
\end{figure*}

\subsection{Coarse-graining with thermodynamic consistency}
We consider a molecular system with atomic coordinates $\bm{R} \in \mathbb{R}^{3 N}$ in thermodynamic equilibrium following an equilibrium distribution 
\begin{align}
    \mu(\bm{R}) \propto \exp(-u(\bm{R}))
\end{align}
where $u$ is the reduced potential energy of the system, whose exact form depends on the choice of the ensemble, e.g. ${u(\bm R)=U(\bm R) / kT}$ for the canonical ensemble with potential energy $U(\bm R)$, temperature $T$ and Boltzmann constant $k$.

Coarse-graining considers a mapping $\bm\Xi \colon \mathbb{R}^{3 N} \rightarrow \mathbb{R}^{3n}$ that projects fine-grained states $\bm R$ onto a lower-dimensional representation $\bm r$. In the present work, we only consider linear and orthogonal maps, $\bm r=\bm \Xi R$. For non-orthogonal or even nonlinear maps, the subsequent mathematical treatment must be generalized \cite{Ciccotti_CommunPureApplMath2008, Kalligiannaki2015}.
As an example, the conformational dynamics of a protein with $N$ atoms can be projected onto a chosen set of beads by only considering the $C_{\alpha}$-atoms in the backbone (Fig.~\ref{fig:method-architecture}a). Coarse-graining with thermodynamic consistency aims at parameterizing a CG model which yields the same density over the CG coordinates as the marginal distribution from the original system, i.e.,
\begin{align}
    \nu(\bm r) = \int d\bm R ~ \mu(\bm R)  \cdot \delta_{[\bm\Xi \bm R = \bm r]}(\bm R).
    \label{eq:PMF-def}
\end{align}
The CG model is often defined by a CG potential $V(\cdot; \bm\theta_{\mathrm{pot}})$ with parameters:~$\nu(\cdot; \bm\theta_{\mathrm{pot}}) \propto \exp(-V(\cdot; \bm\theta_{\mathrm{pot}}))$.
Two conventional parameterization approaches will be introduced below.
It is important to stress that designing a CG force-field by trying to optimize \textit{thermodynamic} consistency does not imply that also the \textit{dynamical} properties are well approximated~\cite{Davtyan2016,Nuske_JChemPhys2019}. 

\subsection{Variational force-matching}
One option is to optimize a candidate potential $V(\cdot;\bm\theta_{\mathrm{pot}})$ with the force information from the ground-truth potential $u$ (Fig~\ref{fig:method-overview}a).
Given a set of fine-grained samples (e.g., MD trajectory) ${\mathcal{D} = \left(\bm R_{1}, \ldots, \bm R_{T}\right)}$ with corresponding forces ${\bm f(\bm R) = - \nabla u(\bm R)}$, 
it is shown that the thermodynamically consistent CG potential (Eq.~\eqref{eq:PMF-def}) can be approximated by the potential
minimizing the \emph{variational force-matching loss} \cite{Noid_JChemPhys2008}
\begin{align}
    \mathcal{L}(\bm\theta_{\mathrm{pot}}) = \mathbb{E}_{\bm R, \bm f \sim \mathcal{D}}\left[\left\| {\bm \Xi}_f \bm f + \nabla_{\bm \Xi \bm R} V(\bm \Xi \bm R; \bm\theta_{\mathrm{pot}}) \right\|_{2}^{2} \right], \label{eq:force-matching-objective}
\end{align}
in which ${\bm \Xi}_f$ is a force mapping operator dependent on map ${\bm \Xi}$. 
When infinite samples $\mathcal{D}$ and all functional forms for $V$ are available, the minimization of the loss (Eq.~\eqref{eq:force-matching-objective}) yields exactly the thermodynamically consistent potential defined by Eq.~\eqref{eq:PMF-def}. Even with finite samples and restrictions on the $V(\cdot;\bm\theta_{\mathrm{pot}})$, the result from the loss minimization still provides a variational approximation in practice.
Because of their enhanced expressiveness, neural networks with physical inductive biases have been shown to be a useful model class for the parameterization of $V(\cdot;\bm\theta_{\mathrm{pot}})$~\cite{wang2019machine, husic2020coarse}.

\subsection{Density estimation / relative entropy method}
Force-matching requires the mapped CG forces to be saved during fine-grained sampling, which is not common practice. Alternatively, one can directly learn a CG model via density estimation on the observed conformational space. Density estimation aims at minimizing the following objective
\begin{align}
    \mathcal{L}(\bm\theta_{\mathrm{pot}}) = \mathbb{E}_{\bm R \sim \mathcal{D}}\left[-\log \nu(\bm \Xi \bm R; \bm\theta_{\mathrm{pot}})\right]. \label{eq:relative-entropy-objective}
\end{align}
The minimum can be interpreted as the maximum-likelihood solution of an energy-based model trained on the projected samples ${\bm \Xi\mathcal{D} = \left(\bm \Xi \bm R_{1}, \ldots \bm \Xi \bm R_{T}\right)}$.
This approach can be related to the relative entropy method in molecular simulation \cite{Shell_JChemPhys2008} and is used for training an energy-based model in the field of machine learning~\cite{lecun2007}. Unfortunately, computing the gradients of Eq.~\eqref{eq:relative-entropy-objective} with respect to parameters generates a sampling problem. Computing the gradient contribution of the normalizing constant involves sampling from the model density $\nu$, which means that the CG model needs to be periodically re-sampled during training (Fig~\ref{fig:method-overview}b).

\subsection{Flow-based density estimation}
We can avoid the sampling problem of Eq.~\eqref{eq:relative-entropy-objective} by using the density $\nu(\cdot; \bm\theta_{\mathrm{flow}})$ corresponding to a model that can be efficiently sampled, such as normalizing flows  \cite{tabak2010density, rezende2015variational, papamakarios2019normalizing}. Flows are invertible neural networks $\Phi(\cdot; \bm\theta_{\mathrm{flow}}) \colon \mathbb{R}^{n} \rightarrow \mathbb{R}^{n}$ that transform an easy-to-sample reference distribution $q(\bm z)$, e.g., a Gaussian or uniform density, into our target density. If we  sample $\bm z \sim q(\bm z)$ and transform it into $\bm r = \Phi(\bm z; \bm\theta_{\mathrm{flow}})$ the resulting density is given by
\begin{align}
    p(\bm r; \bm\theta_{\mathrm{flow}}) = q\left(\Phi^{-1}(\bm r; \bm\theta_{\mathrm{flow}})\right) \cdot \left|J_{\Phi^{-1}}(\bm r; \bm\theta_{\mathrm{flow}})\right|. \label{eq:flow-density}
\end{align}
Inserting Eq.~\eqref{eq:flow-density} into Eq.~\eqref{eq:relative-entropy-objective} we get an efficient training objective. After training, the energy of the normalizing flow
\begin{align}
    \mathcal{V}(\bm r; \bm\theta_{\mathrm{flow}}) = - \log p(\bm r; \bm\theta_{\mathrm{flow}})
\end{align}
approximates the CG PMF.

\subsection{Variational density estimation}
Direct density estimation with flow models suffers from the fact that the flow architecture is constrained in order to represent an invertible function, which compromises their representative power and training dynamics.
As a solution, we consider relaxing the bijectivity constraint by introducing $k$ additional variables and sampling a joint state $\bm z= (\bm z_0, \bm z_1) \in \mathbb{R}^{n + k}$ from a joint (Gaussian/uniform) reference density $q(\bm z_0, \bm z_1)$~(Fig.~\ref{fig:method-architecture}b). Now we define $\Phi$ as an invertible coordinate transformation (e.g., a flow model) over those joint $n+k$ degrees of freedom. Similarly as before, we get the output density $p(\bm r, \bm \eta; \bm\theta_{\mathrm{flow}})$ of a transformed pair ${(\bm r, \bm \eta) = \Phi(\bm z_0, \bm z_1; \bm\theta_{\mathrm{flow}})}$. 
The marginal density over $\bm r$ of this augmented model cannot be computed efficiently. However, we can still optimize a variational bound of the likelihood: we first define a joint density ${\nu(\bm r, \bm \eta) = \nu(\bm r) \cdot \bm \tilde\nu(\bm\eta|\bm r)}$ by introducing a Gaussian conditional density $\tilde\nu(\bm \eta|\bm r)$ and then minimize
\begin{align}
    \mathcal{L}(\bm\theta_{\mathrm{flow}}) &= \mathbb{E}_{\bm R \sim \mathcal{D}, \bm \eta \sim \tilde\nu(\bm \eta|\bm r)}\left[-\log p(\bm \Xi \bm R, \bm \eta; \bm\theta_{\mathrm{flow}})\right] \label{eq:variational-relative-entropy-objective} \\
    &\geq \mathbb{E}_{\bm R \sim \mathcal{D}}\left[-\log p(\bm \Xi \bm R; \bm\theta_{\mathrm{flow}})\right]. \nonumber
\end{align}
As shown in \cite{huang2020augmented, chen2020vflow}, normalizing flows with additional noise dimensions can alleviate limitations of invertible neural networks to transform a simple, unimodal, prior density to a complex, multimodal target density \cite{cornish2020relaxing, wu2020snf, brofos2021manifold}. While the extra dimensions do not allow us to directly compute the density $p(\bm r; \bm\theta_{\mathrm{flow}}),$ and thus $\mathcal{V}(\bm r, \bm\theta_{\mathrm{flow}})$ as well as the corresponding forces, we can still compute a joint energy model over CG coordinates and latent variables 
\begin{align}
    \mathcal{V}(\bm r, \bm \eta;\bm\theta_{\mathrm{flow}}) = - \log p(\bm r, \bm \eta; \bm\theta_{\mathrm{flow}}), 
\end{align}
which can be used to train an arbitrary model of the CG potential as follows.

\subsection{Teacher-student force-matching}
Our idea is to teach the information about the distribution of the CG coordinates $\bm r$ contained in a trained latent-variable model $\mathcal{V}(\bm r, \bm \eta;\bm\theta_{\mathrm{flow}})$ to a ``student'' CG potential $V(\bm r; \bm\theta_{\mathrm{pot}})$ that does not suffer from the architectural constraints of flows (Fig.~\ref{fig:method-architecture}c).
We first draw samples $(\bm r, \bm \eta)$ from our flow model and compute instantaneous forces over CG coordinates $\bm r$:
\begin{align}
    \tilde {\bm f}(\bm r, \bm \eta; \bm\theta_{\mathrm{flow}}) = -\nabla_{\bm r} \mathcal{V}(\bm r, \bm \eta; \bm\theta_{\mathrm{flow}}) \label{eq:instantaneous-forces}.
\end{align}
Any given $\bm r$ may correspond to different $\tilde {\bm f}$, but on average they give rise to the unbiased mean force: 
\begin{align}
    \bm f(\bm r; \bm\theta_{\mathrm{flow}}) = \mathbb{E}_{\bm \eta \sim p(\bm \nu | \bm r; \bm\theta_{\mathrm{flow}})}\left[ \tilde {\bm f}(\bm r, \bm \eta; \bm\theta_{\mathrm{flow}}) \right].
    \label{eq:flow-mean-force}
\end{align}
This relation allows us to efficiently train an unconstrained $V(\bm r; \bm\theta_{\mathrm{pot}})$ via the variational force-matching objective
\begin{align}
    \mathcal{L}(\bm\theta_{\mathrm{pot}})
    &= \mathbb{E}_{(\bm r, \bm \eta) \sim p(\bm\theta_{\mathrm{flow}})}\left[ \left\| \tilde {\bm f}(\bm r, \bm \eta; \bm\theta_{\mathrm{flow}}) + \nabla_{\bm r} V(\bm r; \bm\theta_{\mathrm{pot}})  \right\|^2_2 \right]. \label{eq:variational-force-matching} %
\end{align}
As shown in the supplementary information (SI), the gradients of Eq.~\eqref{eq:variational-force-matching} with respect to the $\bm\theta_{\mathrm{pot}}$ provides an unbiased estimator that does not depend on $\bm\theta_{\mathrm{flow}}$.
The proposed approach resembles conventional  for coarse-graining, but with the difference that it averages over fewer degrees $\bm\eta$ rather than a larger amount of (mainly solvent) degrees of freedom.

As will be shown in the Results, the student model can mitigate flaws in the flow models, namely samples that deviate from physics laws (e.g., containing steric clashes) and the ruggedness of the CG free energy surface.
The student model is also regularized to entail a more robust CG potential than the direct force output of the flow for molecular dynamics simulation.
In addition, the flexibility in choosing the functional form of the CG free energy allows built-in symmetries such as roto-translational energy invariance \cite{wang2019machine} and parameter sharing for obtaining a transferable force field \cite{husic2020coarse}.

\section{Results}
We now employ the flow-matching method to obtain CG molecular models of small proteins. 
To this end, we train flows on the CG coordinate samples extracted from all-atom simulation trajectories. Trained flow models can generate CG coordinates and accompanying forces, which in turn are used to train a neural CG potential via force-matching. 
For demonstration purposes, this work uses an improved version of the CGnet architecture~\cite{wang2019machine} to represent the CG force field (Fig.~\ref{fig:method-architecture}c; see also Methods in the SI). Therefore, these secondary CG models will be denoted as ``Flow-CGnets''.

\subsection{Flow-matching learns accurate CG force fields}
\begin{figure}[htb!]
    \centering
    \includegraphics[width=\linewidth]{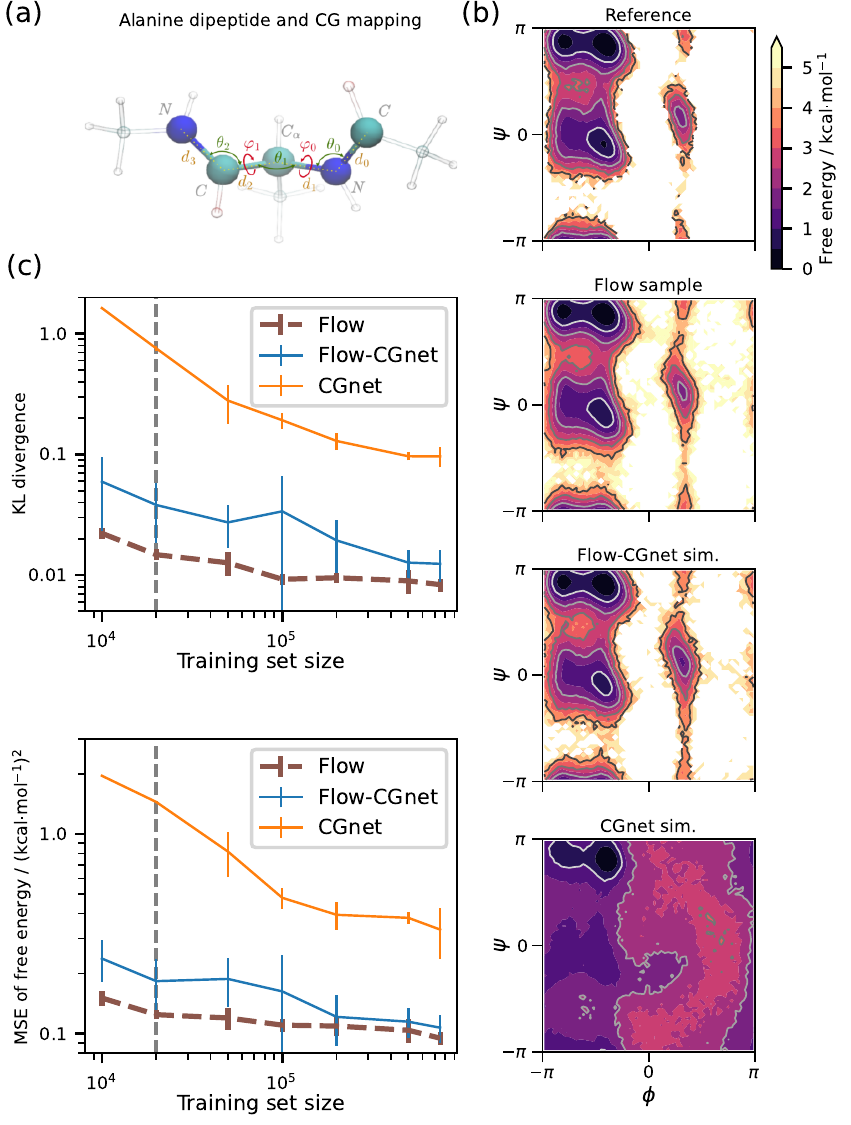}
    \caption{
    Application of flow-matching on capped alanine.
     \textbf{a)} The CG mapping used for the flow and CGnets, $\varphi_0,\:\varphi_1$ represents main chain torsion angles $\phi$ and $\psi$, respectively.
     \textbf{b)} Free energy profile of capped alanine projected on the $\phi/\psi$ plane (Ramachandran plot) for the all-atom ground truth from MD simulation (ground truth), for the flow model, for the Flow-CGnet and for original CGnet model (baseline). The latter three were trained against only 20,000 data points from the reference data (vertical grey dashed lines in \textbf{(c)}).
     \textbf{c)} Model accuracy as a function of training set size for capped alanine. Shown metrics are estimated KL divergence and MSE between discrete free energies on the $\phi/\psi$ plane. Brown dashed curves correspond to the flow after MLE training, while solid lines show values for the \textit{CGNet}s trained on either the flow sample (blue) or the all-atom ground truth sample (orange).
     }
    \label{fig:ala2-new-figure}
\end{figure}

As a first example, we consider capped alanine, also known as alanine dipeptide, to demonstrate that flow-matching can learn accurate CG force fields and achieve much higher statistical efficiency than force-matching.
As in previous work~\cite{wang2019machine, husic2020coarse}, the CG mapping is defined as slicing out the coordinates of five backbone carbons and nitrogens (Fig.~\ref{fig:ala2-new-figure}a).

We quantify the accuracy of different methods based on equilibrium statistics from either direct sampling (for flows) or long simulation trajectories (for CGnets).
We focus on the joint distributions of the $\phi-$ and $\psi-$dihedral angles along the backbone (i.e., Ramachandran plot, Fig.~\ref{fig:ala2-new-figure}b), which are the main degrees of freedom for this system~\cite{Tobias_JPhysChem1992}.
The ground truth for comparison comes from all-atom MD simulation (2 microseconds in total, see Methods in the SI).
As for baseline, we use CGnets trained with classical force-matching~\cite{Noid_JChemPhys2008, wang2019machine} employing forces stored during all-atom simulations.
As illustrated by Fig.~\ref{fig:ala2-new-figure}b, the flow and Flow-CGnet can recover the reference distribution to a very good approximation when only 20,000 reference all-atom conformations are used.
In contrast, a normal CGnet cannot effectively model the dihedral free energy in this low data regime, even with the additionally available force information:~The free energy minima are more or less located according
to the ground truth (representative conformations from all-atom and two CGnet models illustrated and compared in SI Fig.~S6), but the dihedral distribution smears
over the whole space. When increasing the amount of training data, also CGnet trained with force-matching can well approximate the free energy landscape (as reported in Ref.~\citenum{wang2019machine}, where $8 \times 10^5$ configurations and forces were used), but never reaches the flow-matching accuracy for the available dataset (Fig.~\ref{fig:ala2-new-figure}c). 
This comparison displays the advantage of the flow-matching method, which infers the boundary of free energy basins as well as relative weighting between different metastable states better than force-matching, especially for regions rarely covered by the training data, e.g., at transition states.

\subsection{Flow-matching is more data efficient than force-matching}
The better accuracy of Flow-CGnet models can be attributed to higher statistical efficiency.
For illustration, we measure the effects of the training set size on %
the KL divergence and mean square error of torsional free energy, which are
computed on a discrete histogram against the validation data distribution~\cite{husic2020coarse}.
Concretely, we perform training with a varying number of
samples in the training set for both flow-matching and baseline force-matching. 
Detailed training setup can be found in SI.

It can be observed that the direct samples from the flow model ranks first regarding both criteria (Fig.~\ref{fig:ala2-new-figure}c), which renders the knowledge transfer to a student Flow-CGnet model to be ``lossy''.
Nevertheless, the secondary model provides a potential that is not only faster to evaluate, but also numerically more stable for CG molecular dynamics.
Despite that the flow model automatically provides a differentiable energy function, it is not fully accurate in regions with low Boltzmann probabilities:~A simulation with flow potential often visits spurious states outside of the distribution and sometimes experiences numerical blow-ups on the boundary of training data distribution.
This issue is solved by our two-stage training strategy, in which the CGnet can incorporate an additive, physics-inspired term (i.e., the prior energy) to set simulation-friendly free energy barriers and rule out outlier conformations~\cite{wang2019machine}.
Flow samples with unrealistically high force magnitude or located in unrealistic conformational regions can be filtered or reweighted before feeding to the CGnet training (see Methods in the SI).
The remaining samples mostly lie in the high-probability region, thus bringing informative forces for force-matching training.
As a result, the Flow-CGnet also benefits from the flow's efficiency: it achieves an equivalent performance of CGnet at full data set size even with the smallest tested input data amount (Fig.~\ref{fig:ala2-new-figure}c).

In Suppl.~Fig.~S2 we show a similar analysis of the data efficiency of CGnet, the flow and Flow-CGnet for the miniprotein chignolin. As expected, the situation is even more extreme than for alanine dipeptide:~The Flow-CGnet trained on only $2 \times 10^4$ data points is on par with the CGnet trained on all available $1.4 \times 10^6$ data points, resulting in a $70\times$ data efficiency, which is expected to further increase for larger systems.

How can the greater data efficiency of Flow-CGnet compared to force-matching be explained?
While the accuracy of the flow to approximate the Boltzmann distribution depends on the number of conformations used to train it, it achieves a very good approximation with relatively few observed conformations compared to force-matching (Fig.~S3). Although its free energy surface is not necessarily well-behaved in all local details,
the flow can generate abundant samples and forces from the learned distribution, thereby cheaply reducing the error of the trained Flow-CGnet to a similar level as the intrinsic approximation error of the flow (Fig.~S3). 
Additionally, the augmentation channels in the flow model are much fewer in number and have simpler distribution than the internal degrees of freedom in the all-atom system, and therefore the flow's sample forces have much less noise than instantaneous forces stored in all-atom simulations, and better represent the CG mean force.
In this sense, when a proper sample filtering scheme and regularizations on the CGnet models are adopted, the flow can become superior to a limited set of all-atom data in terms of the number of samples as well as the signal-to-noise ratio of forces it feeds to the secondary CGnet.
The performance in this test case suggests Flow-CGnets may extend the application of neural CG potentials to more complex macromolecular systems, where usually only a limited amount of conformations and no forces are available.

\subsection{Flow-matching of fast-folding proteins}

\begin{figure*}[htb!]
    \centering
    \includegraphics[width=17.78cm]{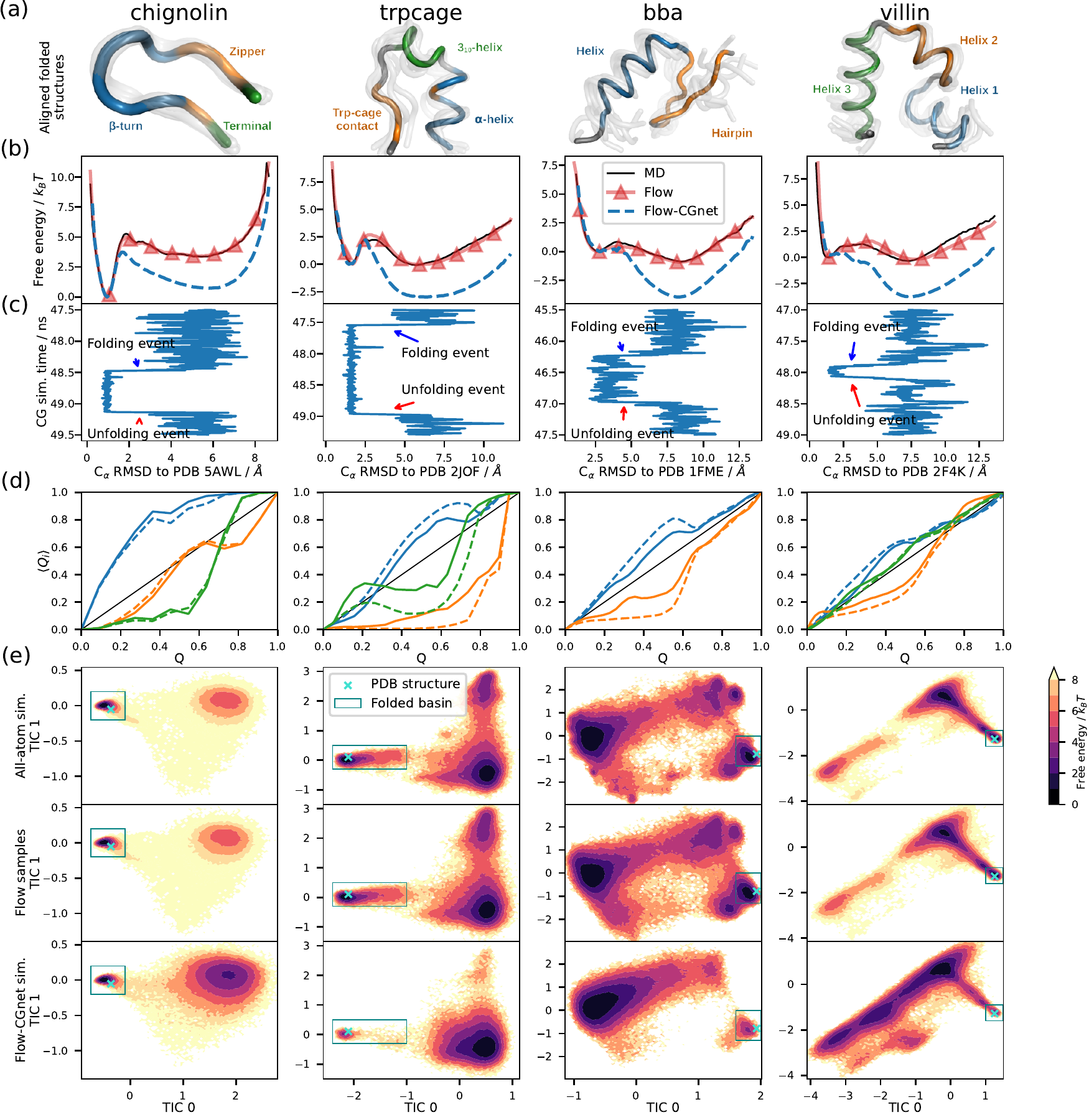}
    \caption{
    Flow-matching results for four fast-folding proteins. From top to bottom:
    \textbf{a)} 10 exemplary folded samples from CG simulation (shown in half-transparent gray color) superposed on the experimental structure. The colored segments correspond to important elements in the folding process;
    \textbf{b)} free energy curve over RMSD for the MD, flow, and Flow-CGnet samples with PDB structure as the reference;
    \textbf{c)} RMSD time series excerpt from CG simulation showing folding and unfolding events;
    \textbf{d)} Average fractions of native contacts $\langle Q_i \rangle$ in different segments of the protein formed at each stage of the folding process (identified by the fraction of all native contacts formed, $Q$). The segments are determined mainly according to secondary structures and are highlighted with the corresponding color in subfigure~(a);
    \textbf{e)} free energy landscapes of all-atom MD, flow and Flow-CGnet model over TICs, turquoise crosses and teal rectangles denoting experimental structures and folded state according to MD trajectories, respectively.
    }
    \label{fig:fastfolders}
\end{figure*}

The flow-matching method is applied to molecular trajectories of four small proteins from Ref.~\citenum{Lindorff_Science2011}, namely chignolin, tryptophan cage (trpcage), the $\alpha$/$\beta$ protein BBA (bba), and the villin headpiece (villin) that consist of 10, 20, 28, and 35 amino acids, respectively (see Ref.~\citenum{Lindorff_Science2011} for simulation details).
These small proteins can be modeled by a flow that operates fully in internal coordinates.
As for other fast folding proteins in Ref.~\citenum{Lindorff_Science2011}, some 
only have a marginally stable state that closely resembles the PDB structure throughout the all-atom trajectories, e.g., BBL; for some fast folders, we can acquire reasonable good flow models, but the folded state cannot be stabilized by the subsequent Flow-CGnet models, e.g., wwdomain and homeodomain; for the rest, the internal-coordinate-based flow model cannot effectively capture the full free energy surface (see Discussion section on scalability).
Each trajectory corresponds to at least 100 $\mu$s of all-atom MD. Note that the trajectories do not contain atomistic forces, so force-matching is not an option for parametrizing a CG force field based on these data. Relative entropy minimization is difficult because it would require iteratively re-sampling the CG model during training, introducing excessive computational cost.

All four proteins are CG using one bead per residue placed upon the C$_\alpha$ (see Fig.~\ref{fig:method-architecture}a). First, normalizing flows are trained for each protein using likelihood maximization on the C$_\alpha$ coordinates. Second, synthetic position/force pairs are generated by the flow, of which the outliers are filtered and reweighed according to the extent they exceed the force magnitude boundary and violate the minimum pairwise distances, respectively. Last, the protein-specific CGnets are optimized via force-matching on the processed flow samples.
The final CGnets are simulated using Langevin dynamics with parallel tempering to produce equilibrium samples from the CG model. The trajectories from the replica at the same temperature as the all-atom simulation are used for the analyses below.
In order to show folding and unfolding events occur without enhanced sampling strategies, we also performed pure Langevin dynamics simulations with learned Flow-CGnet models.
The details on the procedure of training and simulation as well as hyperparameter choices can be found in Methods in the SI.

\subsection{Flow-CGnets recover native structures}
Figure~\ref{fig:fastfolders} compares protein folding between the atomistic and CG simulations. All CG models recover the folded PDB structures up to $2.5$\,\AA{} RMSD, which is of similar quality as the reference all-atom simulations. Figure~\ref{fig:fastfolders}a shows representative structures from the CG simulations superposed with the experimental crystal structures, and Fig.~S7 provides a detailed comparison between the CG and atomistic structural ensembles corresponding to the different minima in the free energy landscape of the four proteins studied, demonstrating excellent agreement.
The free energy plots over the RMSD (Fig.~\ref{fig:fastfolders}b) indicate that the CG conformational distribution matches the projected all-atom trajectory for the folded basin:~The free energy valleys with the lowest RMSD values are centered around almost the same RMSD value and have nearly indistinguishable widths between the CG and MD densities, which indicates that all CG models accurately represent the flexibility of their respective folded states.

\subsection{Flow-CGnets match the folding thermodynamics qualitatively}
Moving into the unfolded region (RMSD $\geq$ 5\AA{} in Fig.~\ref{fig:fastfolders}b), the match between atomistic and CG free energies deteriorates. While all CG models exhibit the characteristic folding free energy barrier, the height of this barrier and the folded/unfolded ratio differ between the MD and CG data. Generally, the folded states are less stable in the CG model. While the flow differs by less than $\approx1\,kT$ from the all-atom result, the Flow-CGnet underestimates the folding free energy by up to $3\,kT$.

Nevertheless, frequent transitions between folded and unfolded configurations were observed in 50~ns simulation runs without parallel tempering (Fig.~\ref{fig:fastfolders}c). This assures that the models still keep the two states kinetically connected.

\subsection{Flow-CGnets reproduce the folding mechanisms}
Figure~\ref{fig:fastfolders}d illustrates the sequence of formation of the protein structure elements during folding, for the all-atom model and the corresponding Flow-CGnet model of the four proteins studied. The average fractions of native
contacts $\langle Q_i \rangle$ formed in different segments of the protein along the folding process \cite{Clementi2003} is reported and shows that the order of formation of the different secondary structure elements is recovered by Flow-CGnet to a good approximation.

\subsection{Flow-CGnets well approximate the folding free energy landscape}
Figure~\ref{fig:fastfolders}e shows the joint densities over the first two TICA coordinates~\cite{Naritomi2011,Perez_JChemPhys2013,Schwantes_JChemTheoryComput2013}, see SI-Section~D.5. These reaction coordinates visualize the slowest processes in the MD simulation, which correspond to folding and unfolding, see SI for details.
The Flow-CGnet densities resemble the atomistic densities, showing that the global patterns in the folding process are captured. The match deteriorates with increasing sequence length:~for chignolin the Flow-CGnet recovers the shape of the distribution well, for trpcage and bba some minor metastable states are missing, and for villin some regions that are sparsely populated in the MD data are overstabilized.

\section{Discussion}
\subsection{Training data requirements}
Flow matching does not require the forces to be saved with the simulation data, and is thus more readily applicable than force-matching.
We have also shown that matching the empirical distribution benefits data efficiency. A drawback is that flow matching requires the underlying all-atom data to come from an equilibrated ensemble. However, this does not need to be achieved in long simulation trajectories:~reweighting from biased ensembles, such as replica-exchange simulations, or reweighting of short trajectories via Markov state models~\cite{Prinz_JChemPhys2011, Husic_JACS2018} are possible.

There are also theoretical developments in generalizing the force-matching method for non-equilibrium cases, such as Ref.~\citenum{harmandaris2016path}.
In such situations (but generally whenever atomistic force information is available), it might be beneficial to train the flow by combining density estimation with force-matching. Such a mixed loss can be especially efficient when using flows with continuous forces~\cite{kohler2021smooth}.

\subsection{Architectural choices for neural networks}
The teacher neural network needs to: (i) be trainable via (approximate) likelihood on sampling data, (ii) permit efficient sampling, and (iii) allow us to compute the instantaneous forces (Eq.~\eqref{eq:instantaneous-forces}).
We found that smooth mixture flows~\cite{kohler2021smooth} on the internal coordinates are able to reproduce the CG conformational distribution very accurately. 
Other latent variable models, including different normalizing flow architectures as well as variational autoencoders~\cite{Kingma2014} and their generalizations~\cite{wu2020snf, nielsen2020survae}, could be used as well. %
Examples of other generative networks used in coarse-graining applications can be found in Refs.~\citenum{Wang2019, DBLP:conf/icml/WangXCMSW0G22, Stieffenhofer2020}.

The student neural network is trained to represent the CG free energy. While here we used a modified version of the simple CGnet method~\cite{wang2019machine}, this network could be replaced by more advanced neural network architectures, such as SchNet~\cite{Schutt_JChemPhys2018}, other graph neural networks~\cite{husic2020coarse,Kovacs2021,Thomas2018,Klicpera2020,Batzner2021,Unke2019,Smith2017,Chen2021}, or other machine learning methods~\cite{John2017},
In principle, flow-matching can be combined with any trainable CG model, either based on neural networks or fixed functional forms with adjustable parameters.

\subsection{Scalability to larger molecules}
We observed that the CG model quality deteriorated and eventually became unusable for larger proteins.
This is because the present normalizing flows are built on a global internal coordinate representation. As the length of the peptide chain grows, the target potential energy becomes extremely sensitive with respect to these internal coordinates. For example, a tiny rotation of one torsion can easily cause steric clashes in a different part of the molecule. 
This may lead to, for example, a sigificant decrease of effective size of the training set after repulsion-based reweighting (see Fig.~S4).
Other work~\cite{noe2019boltzmann, wirnsberger2020targeted, Ding2021, Li2020, Gabrie2022} has also found suitable flow architectures for small molecules, proteins, and even explicitly solvated systems, but did not report whether they could produce quantitatively matching forces. 
One possibility to scale to large molecules is to employ coupling flows with equivariant neural networks operating in Cartesian space while still informed by internal coordinates, but further work is needed in order to find suitable architectures that can sample low-energy structures and have the relevant physics built in.

\section{Conclusions}
We have developed a two-stage approach to bottom-up coarse-graining that addresses two major problems with classical approaches, namely data availability and efficiency. The \emph{flow-matching} method produces thermodynamically consistent CG models without relying on either all-atom ground truth forces or subsequent CG simulations.
The key ingredient of our method is a generative deep neural network that is introduced into the optimization workflow. 
Compared with classical force-matching, flow-matching combined with CGnet captures the global thermodynamics of small peptides much more accurately than CGnet models trained with force-matching. Interestingly, this was even the case, when only a fraction ($<10\%$) of the data was used during training.
The main factor determining the data efficiency of flow-matching with respect to force-matching is the ratio of the number of atoms versus the number of CG particles. For the examples described in this work---where the all-atom systems are solvated macromolecules and the CG models retain only a few solute atoms---this ratio is very large, and the instantaneous all-atom forces projected on the CG coordinates are very noisy.

Applications to four small proteins yielded CG potentials that were able to fold and unfold the proteins via the same pathways as all-atom MD.
Biopolymers such as proteins are an especially interesting candidate for our proposed method, because they can be extremely difficult to sample, which makes the speedup obtained by a CG force field more practically attractive.
Furthermore, bottom-up coarse-graining in the present manner is applicable to many other molecular systems, including other polymers, liquids and materials.
Thereby, the present work opens a new and efficient path to reach near-atomistic accuracy on scales not amenable to atomistic simulations.

The two-step machine learning architecture consisting of a teacher and a student model gives rise to an interesting strategy for training transferable CG potentials: 
One may train separate system-specific teacher networks (e.g., flows) and then train a shared CG force field to obtain a transferable molecular model across the chemical space represented by the training data. 
Again, biopolymers are particularly interesting candidates for transferable CG force fields, as they usually consist of relatively few chemical building blocks which simplifies the parameterization of a force-field that can generalize across all sequences.
We envisage that flow-matching will be an important contribution to the development of transferable CG force fields and thereby help us to access time- and length-scales currently inaccessible to accurate molecular models. 

%%%%%%%%%%%%%%%%%%%%%%%%%%%%%%%%%%%%%%%%%%%%%%%%%%%%%%%%%%%%%%%%%%%%%
%% The "Acknowledgement" section can be given in all manuscript
%% classes.  This should be given within the "acknowledgement"
%% environment, which will make the correct section or running title.
%%%%%%%%%%%%%%%%%%%%%%%%%%%%%%%%%%%%%%%%%%%%%%%%%%%%%%%%%%%%%%%%%%%%%
\begin{acknowledgement}

The authors thank Aleksander E.~P.~Durumeric, Nicholas E.~Charron, Brooke E.~Husic, Klara Bonneau, Manuel Dibak, Leon Klein, Michele Invernizzi and Leon Sixt for insightful discussions.
We gratefully acknowledge funding from the European Commission (Grant No.~ERC CoG 772230 \textquotedblleft ScaleCell\textquotedblright), the International Max Planck Research School for Biology and Computation (IMPRS\textendash BAC), the BMBF (Berlin Institute for Learning and Data, BIFOLD), the Berlin Mathematics center MATH+~(AA1-6, EF1-2) and the Deutsche Forschungsgemeinschaft DFG (GRK DAEDALUS, SFB1114/A04 and B08). 
C.C. acknowledges funding from the Deutsche Forschungsgemeinschaft DFG (SFB/TRR 186, Project A12; SFB 1114, Projects B03 and A04; SFB 1078, Project C7; and RTG 2433, Project Q05), the National Science Foundation (CHE-1900374, and PHY-2019745), and the Einstein Foundation Berlin (Project 0420815101).
The 3D molecular structures are visualized with PyMOL~\cite{PyMOL}.

\end{acknowledgement}

%%%%%%%%%%%%%%%%%%%%%%%%%%%%%%%%%%%%%%%%%%%%%%%%%%%%%%%%%%%%%%%%%%%%%
%% The same is true for Supporting Information, which should use the
%% suppinfo environment.
%%%%%%%%%%%%%%%%%%%%%%%%%%%%%%%%%%%%%%%%%%%%%%%%%%%%%%%%%%%%%%%%%%%%%
\begin{suppinfo}
The supporting information document starts with a concise summary of the methods applied for the training, validation and sampling of all coarse grained models in the main text. 
The following three sections cover comprehensive theoretical derivations (Section B), all necessary details of the coarse grained flow and CGnet model training and validation (Section C) and of the sample analyses and comparisons (Section D).

All CG samples from flow, Flow-CGnet and conventional CGnet models involved in the analyses in the main text have been deposited on the Zenodo platform with DOI:~\href{https://doi.org/10.5281/zenodo.7092156}{10.5281/zenodo.7092156}. The accompanying code for flow-matching and application examples are available at~\href{https://github.com/noegroup/flowm}{https://github.com/noegroup/flowm}.

\end{suppinfo}

%%%%%%%%%%%%%%%%%%%%%%%%%%%%%%%%%%%%%%%%%%%%%%%%%%%%%%%%%%%%%%%%%%%%%
%% The appropriate \bibliography command should be placed here.
%% Notice that the class file automatically sets \bibliographystyle
%% and also names the section correctly.
%%%%%%%%%%%%%%%%%%%%%%%%%%%%%%%%%%%%%%%%%%%%%%%%%%%%%%%%%%%%%%%%%%%%%
%\bibliography{bibtex/achemso-demo, bibtex/refs-impl-solv}

\begin{mcitethebibliography}{78}
\providecommand*\natexlab[1]{#1}
\providecommand*\mciteSetBstSublistMode[1]{}
\providecommand*\mciteSetBstMaxWidthForm[2]{}
\providecommand*\mciteBstWouldAddEndPuncttrue
  {\def\EndOfBibitem{\unskip.}}
\providecommand*\mciteBstWouldAddEndPunctfalse
  {\let\EndOfBibitem\relax}
\providecommand*\mciteSetBstMidEndSepPunct[3]{}
\providecommand*\mciteSetBstSublistLabelBeginEnd[3]{}
\providecommand*\EndOfBibitem{}
\mciteSetBstSublistMode{f}
\mciteSetBstMaxWidthForm{subitem}{(\alph{mcitesubitemcount})}
\mciteSetBstSublistLabelBeginEnd
  {\mcitemaxwidthsubitemform\space}
  {\relax}
  {\relax}

\bibitem[Shaw \latin{et~al.}(2014)Shaw, Grossman, Bank, Batson, Butts, Chao,
  Deneroff, Dror, Even, Fenton, Forte, Gagliardo, Gill, Greskamp, Ho, Ierardi,
  Iserovich, Kuskin, Larson, Layman, Lee, Lerer, Li, Killebrew, Mackenzie, Mok,
  Moraes, Mueller, Nociolo, Peticolas, Quan, Ramot, Salmon, Scarpazza, {Ben
  Schafer}, Siddique, Snyder, Spengler, Tang, Theobald, Toma, Towles, Vitale,
  Wang, and Young]{Shaw2014}
Shaw,~D.~E.; Grossman,~J.~P.; Bank,~J.~A.; Batson,~B.; Butts,~J.~A.;
  Chao,~J.~C.; Deneroff,~M.~M.; Dror,~R.~O.; Even,~A.; Fenton,~C.~H.;
  Forte,~A.; Gagliardo,~J.; Gill,~G.; Greskamp,~B.; Ho,~C.~R.; Ierardi,~D.~J.;
  Iserovich,~L.; Kuskin,~J.~S.; Larson,~R.~H.; Layman,~T.; Lee,~L.~S.;
  Lerer,~A.~K.; Li,~C.; Killebrew,~D.; Mackenzie,~K.~M.; Mok,~S. Y.~H.;
  Moraes,~M.~A.; Mueller,~R.; Nociolo,~L.~J.; Peticolas,~J.~L.; Quan,~T.;
  Ramot,~D.; Salmon,~J.~K.; Scarpazza,~D.~P.; {Ben Schafer},~U.; Siddique,~N.;
  Snyder,~C.~W.; Spengler,~J.; Tang,~P. T.~P.; Theobald,~M.; Toma,~H.;
  Towles,~B.; Vitale,~B.; Wang,~S.~C.; Young,~C. {Anton 2: Raising the Bar for
  Performance and Programmability in a Special-Purpose Molecular Dynamics
  Supercomputer}. \emph{Int. Conf. High Perform. Comput. Networking, Storage
  Anal. SC} \textbf{2014}, \emph{2015-January}, 41--53\relax
\mciteBstWouldAddEndPuncttrue
\mciteSetBstMidEndSepPunct{\mcitedefaultmidpunct}
{\mcitedefaultendpunct}{\mcitedefaultseppunct}\relax
\EndOfBibitem
\bibitem[Shaw \latin{et~al.}(2021)Shaw, Adams, Azaria, Bank, Batson, Bell,
  Bergdorf, Bhatt, {Adam Butts}, Correi, Dirks, Dror, Eastwoo, Edwards, Even,
  Feldmann, Fenn, Fenton, Forte, Gagliardo, Gill, Gorlatova, Greskamp,
  Grossman, Gullingsrud, Harper, Hasenplaugh, Heily, Heshmat, Hunt, Ierardi,
  Iserovich, Jackson, Johnson, Kirk, Klepeis, Kuskin, Mackenzie, Mader,
  McGowen, McLaughlin, Moraes, Nasr, Nociolo, O'Donnell, Parker, Peticolas,
  Pocina, Predescu, Quan, Salmon, Schwink, Shim, Siddique, Spengler, Szalay,
  Tabladillo, Tartler, Taube, Theobald, Towles, Vick, Wang, Wazlowski,
  Weingarten, Williams, and Yuh]{Shaw2021}
Shaw,~D.~E.; Adams,~P.~J.; Azaria,~A.; Bank,~J.~A.; Batson,~B.; Bell,~A.;
  Bergdorf,~M.; Bhatt,~J.; {Adam Butts},~J.; Correi,~T.; Dirks,~R.~M.;
  Dror,~R.~O.; Eastwoo,~M.~P.; Edwards,~B.; Even,~A.; Feldmann,~P.; Fenn,~M.;
  Fenton,~C.~H.; Forte,~A.; Gagliardo,~J.; Gill,~G.; Gorlatova,~M.;
  Greskamp,~B.; Grossman,~J.~P.; Gullingsrud,~J.; Harper,~A.; Hasenplaugh,~W.;
  Heily,~M.; Heshmat,~B.~C.; Hunt,~J.; Ierardi,~D.~J.; Iserovich,~L.;
  Jackson,~B.~L.; Johnson,~N.~P.; Kirk,~M.~M.; Klepeis,~J.~L.; Kuskin,~J.~S.;
  Mackenzie,~K.~M.; Mader,~R.~J.; McGowen,~R.; McLaughlin,~A.; Moraes,~M.~A.;
  Nasr,~M.~H.; Nociolo,~L.~J.; O'Donnell,~L.; Parker,~A.; Peticolas,~J.~L.;
  Pocina,~G.; Predescu,~C.; Quan,~T.; Salmon,~J.~K.; Schwink,~C.; Shim,~K.~S.;
  Siddique,~N.; Spengler,~J.; Szalay,~T.; Tabladillo,~R.; Tartler,~R.;
  Taube,~A.~G.; Theobald,~M.; Towles,~B.; Vick,~W.; Wang,~S.~C.; Wazlowski,~M.;
  Weingarten,~M.~J.; Williams,~J.~M.; Yuh,~K.~A. {Anton 3: Twenty Microseconds
  of Molecular Dynamics Simulation before Lunch}. \emph{Int. Conf. High
  Perform. Comput. Networking, Storage Anal. SC} \textbf{2021}, \relax
\mciteBstWouldAddEndPunctfalse
\mciteSetBstMidEndSepPunct{\mcitedefaultmidpunct}
{}{\mcitedefaultseppunct}\relax
\EndOfBibitem
\bibitem[Prinz \latin{et~al.}(2011)Prinz, Wu, Sarich, Keller, Senne, Held,
  Chodera, Sch{\"u}tte, and No{\'e}]{Prinz_JChemPhys2011}
Prinz,~J.-H.; Wu,~H.; Sarich,~M.; Keller,~B.; Senne,~M.; Held,~M.;
  Chodera,~J.~D.; Sch{\"u}tte,~C.; No{\'e},~F. Markov models of molecular
  kinetics: Generation and validation. \emph{J. Chem. Phys.} \textbf{2011},
  \emph{134}, 174105\relax
\mciteBstWouldAddEndPuncttrue
\mciteSetBstMidEndSepPunct{\mcitedefaultmidpunct}
{\mcitedefaultendpunct}{\mcitedefaultseppunct}\relax
\EndOfBibitem
\bibitem[Husic and Pande(2018)Husic, and Pande]{Husic_JACS2018}
Husic,~B.~E.; Pande,~V.~S. Markov state models: From an art to a science.
  \emph{J. Am. Chem. Soc.} \textbf{2018}, \emph{140}, 2386--2396\relax
\mciteBstWouldAddEndPuncttrue
\mciteSetBstMidEndSepPunct{\mcitedefaultmidpunct}
{\mcitedefaultendpunct}{\mcitedefaultseppunct}\relax
\EndOfBibitem
\bibitem[Lindorff-Larsen \latin{et~al.}(2011)Lindorff-Larsen, Piana, Dror, and
  Shaw]{Lindorff_Science2011}
Lindorff-Larsen,~K.; Piana,~S.; Dror,~R.~O.; Shaw,~D.~E. How fast-folding
  proteins fold. \emph{Science} \textbf{2011}, \emph{334}, 517--520\relax
\mciteBstWouldAddEndPuncttrue
\mciteSetBstMidEndSepPunct{\mcitedefaultmidpunct}
{\mcitedefaultendpunct}{\mcitedefaultseppunct}\relax
\EndOfBibitem
\bibitem[Plattner \latin{et~al.}(2017)Plattner, Doerr, {De Fabritiis}, and
  No{\'{e}}]{Plattner2017}
Plattner,~N.; Doerr,~S.; {De Fabritiis},~G.; No{\'{e}},~F. {Complete
  protein–protein association kinetics in atomic detail revealed by molecular
  dynamics simulations and Markov modelling}. \emph{Nat. Chem. 2017 910}
  \textbf{2017}, \emph{9}, 1005--1011\relax
\mciteBstWouldAddEndPuncttrue
\mciteSetBstMidEndSepPunct{\mcitedefaultmidpunct}
{\mcitedefaultendpunct}{\mcitedefaultseppunct}\relax
\EndOfBibitem
\bibitem[Clementi \latin{et~al.}(2000)Clementi, Nymeyer, and
  Onuchic]{Clementi_JMolBiol2000}
Clementi,~C.; Nymeyer,~H.; Onuchic,~J.~N. Topological and energetic factors:
  what determines the structural details of the transition state ensemble and
  ``en-route'' intermediates for protein folding? An investigation for small
  globular proteins. \emph{J. Mol. Biol.} \textbf{2000}, \emph{298},
  937--953\relax
\mciteBstWouldAddEndPuncttrue
\mciteSetBstMidEndSepPunct{\mcitedefaultmidpunct}
{\mcitedefaultendpunct}{\mcitedefaultseppunct}\relax
\EndOfBibitem
\bibitem[Clementi(2008)]{ClementiCOSB}
Clementi,~C. Coarse-grained models of protein folding: Toy-models or predictive
  tools? \emph{Curr. Opin. Struct. Biol.} \textbf{2008}, \emph{18},
  10--15\relax
\mciteBstWouldAddEndPuncttrue
\mciteSetBstMidEndSepPunct{\mcitedefaultmidpunct}
{\mcitedefaultendpunct}{\mcitedefaultseppunct}\relax
\EndOfBibitem
\bibitem[Matysiak and Clementi(2004)Matysiak, and Clementi]{Matysiak2004}
Matysiak,~S.; Clementi,~C. {Optimal combination of theory and experiment for
  the characterization of the protein folding landscape of S6: How far can a
  minimalist model go?} \emph{J. Mol. Biol.} \textbf{2004}, \emph{343},
  235--248\relax
\mciteBstWouldAddEndPuncttrue
\mciteSetBstMidEndSepPunct{\mcitedefaultmidpunct}
{\mcitedefaultendpunct}{\mcitedefaultseppunct}\relax
\EndOfBibitem
\bibitem[Matysiak and Clementi(2006)Matysiak, and Clementi]{Matysiak2006}
Matysiak,~S.; Clementi,~C. {Minimalist Protein Model as a Diagnostic Tool for
  Misfolding and Aggregation}. \emph{J. Mol. Biol.} \textbf{2006}, \emph{363},
  297--308\relax
\mciteBstWouldAddEndPuncttrue
\mciteSetBstMidEndSepPunct{\mcitedefaultmidpunct}
{\mcitedefaultendpunct}{\mcitedefaultseppunct}\relax
\EndOfBibitem
\bibitem[Das \latin{et~al.}(2005)Das, Matysiak, and Clementi]{das2005balancing}
Das,~P.; Matysiak,~S.; Clementi,~C. Balancing energy and entropy: A minimalist
  model for the characterization of protein folding landscapes. \emph{Proc.
  Natl. Acad. Sci. USA} \textbf{2005}, \emph{102}, 10141--10146\relax
\mciteBstWouldAddEndPuncttrue
\mciteSetBstMidEndSepPunct{\mcitedefaultmidpunct}
{\mcitedefaultendpunct}{\mcitedefaultseppunct}\relax
\EndOfBibitem
\bibitem[Saunders and Voth(2013)Saunders, and
  Voth]{Saunders_AnnuRevBiophys2013}
Saunders,~M.~G.; Voth,~G.~A. {Coarse-Graining Methods for Computational
  Biology}. \emph{Annu. Rev. Bioph. Biom.} \textbf{2013}, \emph{42},
  73--93\relax
\mciteBstWouldAddEndPuncttrue
\mciteSetBstMidEndSepPunct{\mcitedefaultmidpunct}
{\mcitedefaultendpunct}{\mcitedefaultseppunct}\relax
\EndOfBibitem
\bibitem[Noid(2013)]{Noid_JChemPhys2013}
Noid,~W.~G. {Perspective: Coarse-grained models for biomolecular systems}.
  \emph{J. Chem. Phys.} \textbf{2013}, \emph{139}, 090901\relax
\mciteBstWouldAddEndPuncttrue
\mciteSetBstMidEndSepPunct{\mcitedefaultmidpunct}
{\mcitedefaultendpunct}{\mcitedefaultseppunct}\relax
\EndOfBibitem
\bibitem[Ing\'{o}lfsson \latin{et~al.}(2014)Ing\'{o}lfsson, Lopez, Uusitalo,
  de~Jong, Gopal, Periole, and Marrink]{Ingolfsson_WIREsComputMolSci2014}
Ing\'{o}lfsson,~H.~I.; Lopez,~C.~A.; Uusitalo,~J.~J.; de~Jong,~D.~H.;
  Gopal,~S.~M.; Periole,~X.; Marrink,~S.~J. {The power of coarse graining in
  biomolecular simulations}. \emph{WIREs Comput. Mol. Sci.} \textbf{2014},
  \emph{4}, 225--248\relax
\mciteBstWouldAddEndPuncttrue
\mciteSetBstMidEndSepPunct{\mcitedefaultmidpunct}
{\mcitedefaultendpunct}{\mcitedefaultseppunct}\relax
\EndOfBibitem
\bibitem[Kmiecik \latin{et~al.}(2016)Kmiecik, Gront, Kolinski, Wieteska, Dawid,
  and Kolinski]{Kmiecik_ChemRev2016}
Kmiecik,~S.; Gront,~D.; Kolinski,~M.; Wieteska,~L.; Dawid,~A.~E.; Kolinski,~A.
  Coarse-grained protein models and their applications. \emph{Chem. Rev.}
  \textbf{2016}, \emph{116}, 7898--7936\relax
\mciteBstWouldAddEndPuncttrue
\mciteSetBstMidEndSepPunct{\mcitedefaultmidpunct}
{\mcitedefaultendpunct}{\mcitedefaultseppunct}\relax
\EndOfBibitem
\bibitem[Pak and Voth(2018)Pak, and Voth]{Pak_CurrOpinStructBiol2018}
Pak,~A.~J.; Voth,~G.~A. {Advances in coarse-grained modeling of macromolecular
  complexes}. \emph{Curr. Opin. Struc. Biol.} \textbf{2018}, \emph{52},
  119\relax
\mciteBstWouldAddEndPuncttrue
\mciteSetBstMidEndSepPunct{\mcitedefaultmidpunct}
{\mcitedefaultendpunct}{\mcitedefaultseppunct}\relax
\EndOfBibitem
\bibitem[Chen \latin{et~al.}(2018)Chen, Chen, Pinamonti, and
  Clementi]{Chen_JChemTheoryComput2018}
Chen,~J.; Chen,~J.; Pinamonti,~G.; Clementi,~C. Learning effective molecular
  models from experimental observables. \emph{J. Chem. Theory Comput.}
  \textbf{2018}, \emph{14}, 3849--3858\relax
\mciteBstWouldAddEndPuncttrue
\mciteSetBstMidEndSepPunct{\mcitedefaultmidpunct}
{\mcitedefaultendpunct}{\mcitedefaultseppunct}\relax
\EndOfBibitem
\bibitem[Singh and Li(2019)Singh, and Li]{Singh_IntJMolSci2019}
Singh,~N.; Li,~W. {Recent Advances in Coarse-Grained Models for Biomolecules
  and Their Applications}. \emph{Int. J. Mol. Sci.} \textbf{2019},
  \emph{20}\relax
\mciteBstWouldAddEndPuncttrue
\mciteSetBstMidEndSepPunct{\mcitedefaultmidpunct}
{\mcitedefaultendpunct}{\mcitedefaultseppunct}\relax
\EndOfBibitem
\bibitem[N{\"u}ske \latin{et~al.}(2019)N{\"u}ske, Boninsegna, and
  Clementi]{Nuske_JChemPhys2019}
N{\"u}ske,~F.; Boninsegna,~L.; Clementi,~C. Coarse-graining molecular systems
  by spectral matching. \emph{J. Chem. Phys.} \textbf{2019}, \emph{151},
  044116\relax
\mciteBstWouldAddEndPuncttrue
\mciteSetBstMidEndSepPunct{\mcitedefaultmidpunct}
{\mcitedefaultendpunct}{\mcitedefaultseppunct}\relax
\EndOfBibitem
\bibitem[Wang \latin{et~al.}(2019)Wang, Olsson, Wehmeyer, P{\'e}rez, Charron,
  De~Fabritiis, No{\'e}, and Clementi]{wang2019machine}
Wang,~J.; Olsson,~S.; Wehmeyer,~C.; P{\'e}rez,~A.; Charron,~N.~E.;
  De~Fabritiis,~G.; No{\'e},~F.; Clementi,~C. Machine learning of
  coarse-grained molecular dynamics force fields. \emph{ACS central science}
  \textbf{2019}, \emph{5}, 755--767\relax
\mciteBstWouldAddEndPuncttrue
\mciteSetBstMidEndSepPunct{\mcitedefaultmidpunct}
{\mcitedefaultendpunct}{\mcitedefaultseppunct}\relax
\EndOfBibitem
\bibitem[Wang \latin{et~al.}(2020)Wang, Chmiela, M{\"u}ller, No{\'e}, and
  Clementi]{Wang_JChemPhys2020}
Wang,~J.; Chmiela,~S.; M{\"u}ller,~K.-R.; No{\'e},~F.; Clementi,~C. Ensemble
  learning of coarse-grained molecular dynamics force fields with a kernel
  approach. \emph{J. Chem. Phys.} \textbf{2020}, \emph{152}, 194106\relax
\mciteBstWouldAddEndPuncttrue
\mciteSetBstMidEndSepPunct{\mcitedefaultmidpunct}
{\mcitedefaultendpunct}{\mcitedefaultseppunct}\relax
\EndOfBibitem
\bibitem[Husic \latin{et~al.}(2020)Husic, Charron, Lemm, Wang, P{\'e}rez,
  Majewski, Kr{\"a}mer, Chen, Olsson, de~Fabritiis, \latin{et~al.}
  others]{husic2020coarse}
Husic,~B.~E.; Charron,~N.~E.; Lemm,~D.; Wang,~J.; P{\'e}rez,~A.; Majewski,~M.;
  Kr{\"a}mer,~A.; Chen,~Y.; Olsson,~S.; de~Fabritiis,~G., \latin{et~al.}
  Coarse graining molecular dynamics with graph neural networks. \emph{J. Chem.
  Phys.} \textbf{2020}, \emph{153}, 194101\relax
\mciteBstWouldAddEndPuncttrue
\mciteSetBstMidEndSepPunct{\mcitedefaultmidpunct}
{\mcitedefaultendpunct}{\mcitedefaultseppunct}\relax
\EndOfBibitem
\bibitem[Jin \latin{et~al.}(2022)Jin, Pak, Durumeric, Loose, and
  Voth]{jin2022bottom}
Jin,~J.; Pak,~A.~J.; Durumeric,~A.~E.; Loose,~T.~D.; Voth,~G.~A. Bottom-up
  Coarse-Graining: Principles and Perspectives. \emph{Journal of Chemical
  Theory and Computation} \textbf{2022}, \emph{18}, 5759--5791\relax
\mciteBstWouldAddEndPuncttrue
\mciteSetBstMidEndSepPunct{\mcitedefaultmidpunct}
{\mcitedefaultendpunct}{\mcitedefaultseppunct}\relax
\EndOfBibitem
\bibitem[Boninsegna \latin{et~al.}(2018)Boninsegna, Banisch, and
  Clementi]{boninsegna2018data}
Boninsegna,~L.; Banisch,~R.; Clementi,~C. A data-driven perspective on the
  hierarchical assembly of molecular structures. \emph{Journal of Chemical
  Theory and Computation} \textbf{2018}, \emph{14}, 453--460\relax
\mciteBstWouldAddEndPuncttrue
\mciteSetBstMidEndSepPunct{\mcitedefaultmidpunct}
{\mcitedefaultendpunct}{\mcitedefaultseppunct}\relax
\EndOfBibitem
\bibitem[Wang and G{\'{o}}mez-Bombarelli(2019)Wang, and
  G{\'{o}}mez-Bombarelli]{Wang2019}
Wang,~W.; G{\'{o}}mez-Bombarelli,~R. {Coarse-graining auto-encoders for
  molecular dynamics}. \emph{npj Comput. Mater.} \textbf{2019}, \emph{5},
  1--9\relax
\mciteBstWouldAddEndPuncttrue
\mciteSetBstMidEndSepPunct{\mcitedefaultmidpunct}
{\mcitedefaultendpunct}{\mcitedefaultseppunct}\relax
\EndOfBibitem
\bibitem[Wagner \latin{et~al.}(2016)Wagner, Dama, Durumeric, and
  Voth]{wagner2016representability}
Wagner,~J.~W.; Dama,~J.~F.; Durumeric,~A.~E.; Voth,~G.~A. On the
  representability problem and the physical meaning of coarse-grained models.
  \emph{J. Chem. Phys.} \textbf{2016}, \emph{145}, 044108\relax
\mciteBstWouldAddEndPuncttrue
\mciteSetBstMidEndSepPunct{\mcitedefaultmidpunct}
{\mcitedefaultendpunct}{\mcitedefaultseppunct}\relax
\EndOfBibitem
\bibitem[Dunn \latin{et~al.}(2016)Dunn, Foley, and Noid]{dunn2016van}
Dunn,~N.~J.; Foley,~T.~T.; Noid,~W.~G. Van der Waals perspective on
  coarse-graining: Progress toward solving representability and transferability
  problems. \emph{Acc. Chem. Res.} \textbf{2016}, \emph{49}, 2832--2840\relax
\mciteBstWouldAddEndPuncttrue
\mciteSetBstMidEndSepPunct{\mcitedefaultmidpunct}
{\mcitedefaultendpunct}{\mcitedefaultseppunct}\relax
\EndOfBibitem
\bibitem[Jin \latin{et~al.}(2019)Jin, Pak, and Voth]{jin2019understanding}
Jin,~J.; Pak,~A.~J.; Voth,~G.~A. Understanding missing entropy in
  coarse-grained systems: Addressing issues of representability and
  transferability. \emph{J. Phys. Chem. Lett.} \textbf{2019}, \emph{10},
  4549--4557\relax
\mciteBstWouldAddEndPuncttrue
\mciteSetBstMidEndSepPunct{\mcitedefaultmidpunct}
{\mcitedefaultendpunct}{\mcitedefaultseppunct}\relax
\EndOfBibitem
\bibitem[Dannenhoffer-Lafage \latin{et~al.}(2019)Dannenhoffer-Lafage, Wagner,
  Durumeric, and Voth]{dannenhoffer2019compatible}
Dannenhoffer-Lafage,~T.; Wagner,~J.~W.; Durumeric,~A.~E.; Voth,~G.~A.
  Compatible observable decompositions for coarse-grained representations of
  real molecular systems. \emph{J. Chem. Phys.} \textbf{2019}, \emph{151},
  134115\relax
\mciteBstWouldAddEndPuncttrue
\mciteSetBstMidEndSepPunct{\mcitedefaultmidpunct}
{\mcitedefaultendpunct}{\mcitedefaultseppunct}\relax
\EndOfBibitem
\bibitem[Lebold and Noid(2019)Lebold, and Noid]{doi:10.1063/1.5125246}
Lebold,~K.~M.; Noid,~W.~G. Dual-potential approach for coarse-grained implicit
  solvent models with accurate, internally consistent energetics and predictive
  transferability. \emph{The Journal of Chemical Physics} \textbf{2019},
  \emph{151}, 164113\relax
\mciteBstWouldAddEndPuncttrue
\mciteSetBstMidEndSepPunct{\mcitedefaultmidpunct}
{\mcitedefaultendpunct}{\mcitedefaultseppunct}\relax
\EndOfBibitem
\bibitem[Reith \latin{et~al.}(2003)Reith, Pütz, and Müller-Plathe]{Reith2003}
Reith,~D.; Pütz,~M.; Müller-Plathe,~F. Deriving effective mesoscale
  potentials from atomistic simulations. \emph{Journal of Computational
  Chemistry} \textbf{2003}, \emph{24}, 1624--1636\relax
\mciteBstWouldAddEndPuncttrue
\mciteSetBstMidEndSepPunct{\mcitedefaultmidpunct}
{\mcitedefaultendpunct}{\mcitedefaultseppunct}\relax
\EndOfBibitem
\bibitem[Izvekov and Voth(2005)Izvekov, and Voth]{Izvekov_JPhysChemB2005}
Izvekov,~S.; Voth,~G.~A. {A Multiscale Coarse-Graining Method for Biomolecular
  Systems}. \emph{J. Phys. Chem. B} \textbf{2005}, \emph{109}, 2469--2473\relax
\mciteBstWouldAddEndPuncttrue
\mciteSetBstMidEndSepPunct{\mcitedefaultmidpunct}
{\mcitedefaultendpunct}{\mcitedefaultseppunct}\relax
\EndOfBibitem
\bibitem[Noid \latin{et~al.}(2008)Noid, Chu, Ayton, Krishna, Izvekov, Voth,
  Das, and Andersen]{Noid_JChemPhys2008}
Noid,~W.~G.; Chu,~J.-W.; Ayton,~G.~S.; Krishna,~V.; Izvekov,~S.; Voth,~G.~A.;
  Das,~A.; Andersen,~H.~C. {The multiscale coarse-graining method. I. A
  rigorous bridge between atomistic and coarse-grained models}. \emph{J. Chem.
  Phys.} \textbf{2008}, \emph{128}, 244114\relax
\mciteBstWouldAddEndPuncttrue
\mciteSetBstMidEndSepPunct{\mcitedefaultmidpunct}
{\mcitedefaultendpunct}{\mcitedefaultseppunct}\relax
\EndOfBibitem
\bibitem[Shell(2008)]{Shell_JChemPhys2008}
Shell,~M.~S. The relative entropy is fundamental to multiscale and inverse
  thermodynamic problems. \emph{J. Chem. Phys.} \textbf{2008}, \emph{129},
  144108\relax
\mciteBstWouldAddEndPuncttrue
\mciteSetBstMidEndSepPunct{\mcitedefaultmidpunct}
{\mcitedefaultendpunct}{\mcitedefaultseppunct}\relax
\EndOfBibitem
\bibitem[Lyubartsev and Laaksonen(1995)Lyubartsev, and
  Laaksonen]{PhysRevE.52.3730}
Lyubartsev,~A.~P.; Laaksonen,~A. Calculation of effective interaction
  potentials from radial distribution functions: A reverse Monte Carlo
  approach. \emph{Phys. Rev. E} \textbf{1995}, \emph{52}, 3730--3737\relax
\mciteBstWouldAddEndPuncttrue
\mciteSetBstMidEndSepPunct{\mcitedefaultmidpunct}
{\mcitedefaultendpunct}{\mcitedefaultseppunct}\relax
\EndOfBibitem
\bibitem[Pak \latin{et~al.}(2019)Pak, Dannenhoffer-Lafage, Madsen, and
  Voth]{Pak2019}
Pak,~A.~J.; Dannenhoffer-Lafage,~T.; Madsen,~J.~J.; Voth,~G.~A. Systematic
  Coarse-Grained Lipid Force Fields with Semiexplicit Solvation via Virtual
  Sites. \emph{J. Chem. Theory Comput.} \textbf{2019}, \emph{15},
  2087--2100\relax
\mciteBstWouldAddEndPuncttrue
\mciteSetBstMidEndSepPunct{\mcitedefaultmidpunct}
{\mcitedefaultendpunct}{\mcitedefaultseppunct}\relax
\EndOfBibitem
\bibitem[Tabak \latin{et~al.}(2010)Tabak, Vanden-Eijnden, \latin{et~al.}
  others]{tabak2010density}
Tabak,~E.~G.; Vanden-Eijnden,~E., \latin{et~al.}  Density estimation by dual
  ascent of the log-likelihood. \emph{Communications in Mathematical Sciences}
  \textbf{2010}, \emph{8}, 217--233\relax
\mciteBstWouldAddEndPuncttrue
\mciteSetBstMidEndSepPunct{\mcitedefaultmidpunct}
{\mcitedefaultendpunct}{\mcitedefaultseppunct}\relax
\EndOfBibitem
\bibitem[Rezende and Mohamed(2015)Rezende, and Mohamed]{rezende2015variational}
Rezende,~D.; Mohamed,~S. Variational inference with normalizing flows.
  International Conference on Machine Learning. 2015; pp 1530--1538\relax
\mciteBstWouldAddEndPuncttrue
\mciteSetBstMidEndSepPunct{\mcitedefaultmidpunct}
{\mcitedefaultendpunct}{\mcitedefaultseppunct}\relax
\EndOfBibitem
\bibitem[Papamakarios \latin{et~al.}(2021)Papamakarios, Nalisnick, Rezende,
  Mohamed, and Lakshminarayanan]{papamakarios2019normalizing}
Papamakarios,~G.; Nalisnick,~E.; Rezende,~D.~J.; Mohamed,~S.;
  Lakshminarayanan,~B. Normalizing flows for probabilistic modeling and
  inference. \emph{Journal of Machine Learning Research} \textbf{2021},
  \emph{22}, 1--64\relax
\mciteBstWouldAddEndPuncttrue
\mciteSetBstMidEndSepPunct{\mcitedefaultmidpunct}
{\mcitedefaultendpunct}{\mcitedefaultseppunct}\relax
\EndOfBibitem
\bibitem[No{\'e} \latin{et~al.}(2019)No{\'e}, Olsson, K{\"o}hler, and
  Wu]{noe2019boltzmann}
No{\'e},~F.; Olsson,~S.; K{\"o}hler,~J.; Wu,~H. Boltzmann generators: Sampling
  equilibrium states of many-body systems with deep learning. \emph{Science}
  \textbf{2019}, \emph{365}, eaaw1147\relax
\mciteBstWouldAddEndPuncttrue
\mciteSetBstMidEndSepPunct{\mcitedefaultmidpunct}
{\mcitedefaultendpunct}{\mcitedefaultseppunct}\relax
\EndOfBibitem
\bibitem[Gabri{\'{e}}u \latin{et~al.}(2022)Gabri{\'{e}}u, Rotskoff, and
  Vanden-Eijnden]{Gabrie2022}
Gabri{\'{e}}u,~M.; Rotskoff,~G.~M.; Vanden-Eijnden,~E. Adaptive Monte Carlo
  augmented with normalizing flows. \emph{Proceedings of the National Academy
  of Sciences of the United States of America} \textbf{2022}, \emph{119}\relax
\mciteBstWouldAddEndPuncttrue
\mciteSetBstMidEndSepPunct{\mcitedefaultmidpunct}
{\mcitedefaultendpunct}{\mcitedefaultseppunct}\relax
\EndOfBibitem
\bibitem[Li \latin{et~al.}(2020)Li, Dong, Zhang, and Wang]{Li2020}
Li,~S.~H.; Dong,~C.~X.; Zhang,~L.; Wang,~L. {Neural Canonical Transformation
  with Symplectic Flows}. \emph{Phys. Rev. X} \textbf{2020}, \emph{10},
  021020\relax
\mciteBstWouldAddEndPuncttrue
\mciteSetBstMidEndSepPunct{\mcitedefaultmidpunct}
{\mcitedefaultendpunct}{\mcitedefaultseppunct}\relax
\EndOfBibitem
\bibitem[Nicoli \latin{et~al.}(2020)Nicoli, Nakajima, Strodthoff, Samek,
  M{\"{u}}ller, and Kessel]{Nicoli2020}
Nicoli,~K.~A.; Nakajima,~S.; Strodthoff,~N.; Samek,~W.; M{\"{u}}ller,~K.~R.;
  Kessel,~P. {Asymptotically unbiased estimation of physical observables with
  neural samplers}. \emph{Phys. Rev. E} \textbf{2020}, \emph{101}, 023304\relax
\mciteBstWouldAddEndPuncttrue
\mciteSetBstMidEndSepPunct{\mcitedefaultmidpunct}
{\mcitedefaultendpunct}{\mcitedefaultseppunct}\relax
\EndOfBibitem
\bibitem[Liu \latin{et~al.}(2021)Liu, Xu, Jiang, and Wong]{Liu2021}
Liu,~Q.; Xu,~J.; Jiang,~R.; Wong,~W.~H. {Density estimation using deep
  generative neural networks}. \emph{Proc. Natl. Acad. Sci. U. S. A.}
  \textbf{2021}, \emph{118}\relax
\mciteBstWouldAddEndPuncttrue
\mciteSetBstMidEndSepPunct{\mcitedefaultmidpunct}
{\mcitedefaultendpunct}{\mcitedefaultseppunct}\relax
\EndOfBibitem
\bibitem[Ding and Zhang(2021)Ding, and Zhang]{Ding2021}
Ding,~X.; Zhang,~B. {DeepBAR: A Fast and Exact Method for Binding Free Energy
  Computation}. \emph{J. Phys. Chem. Lett.} \textbf{2021}, \emph{12},
  2509--2515\relax
\mciteBstWouldAddEndPuncttrue
\mciteSetBstMidEndSepPunct{\mcitedefaultmidpunct}
{\mcitedefaultendpunct}{\mcitedefaultseppunct}\relax
\EndOfBibitem
\bibitem[Wirnsberger \latin{et~al.}(2020)Wirnsberger, Ballard, Papamakarios,
  Abercrombie, Racani{\`e}re, Pritzel, Blundell, \latin{et~al.}
  others]{wirnsberger2020targeted}
Wirnsberger,~P.; Ballard,~A.; Papamakarios,~G.; Abercrombie,~S.;
  Racani{\`e}re,~S.; Pritzel,~A.; Blundell,~C., \latin{et~al.}  Targeted free
  energy estimation via learned mappings. \emph{The Journal of Chemical
  Physics} \textbf{2020}, \emph{153}, 144112--144112\relax
\mciteBstWouldAddEndPuncttrue
\mciteSetBstMidEndSepPunct{\mcitedefaultmidpunct}
{\mcitedefaultendpunct}{\mcitedefaultseppunct}\relax
\EndOfBibitem
\bibitem[Wang \latin{et~al.}(2022)Wang, Xu, Cai, Miller, Smidt, Wang, Tang, and
  G{\'{o}}mez{-}Bombarelli]{DBLP:conf/icml/WangXCMSW0G22}
Wang,~W.; Xu,~M.; Cai,~C.; Miller,~B.~K.; Smidt,~T.~E.; Wang,~Y.; Tang,~J.;
  G{\'{o}}mez{-}Bombarelli,~R. Generative Coarse-Graining of Molecular
  Conformations. International Conference on Machine Learning, {ICML} 2022,
  17-23 July 2022, Baltimore, Maryland, {USA}. 2022; pp 23213--23236\relax
\mciteBstWouldAddEndPuncttrue
\mciteSetBstMidEndSepPunct{\mcitedefaultmidpunct}
{\mcitedefaultendpunct}{\mcitedefaultseppunct}\relax
\EndOfBibitem
\bibitem[Stieffenhofer \latin{et~al.}(2020)Stieffenhofer, Wand, and
  Bereau]{Stieffenhofer2020}
Stieffenhofer,~M.; Wand,~M.; Bereau,~T. {Adversarial reverse mapping of
  equilibrated condensed-phase molecular structures}. \emph{Mach. Learn. Sci.
  Technol.} \textbf{2020}, \emph{1}, 045014\relax
\mciteBstWouldAddEndPuncttrue
\mciteSetBstMidEndSepPunct{\mcitedefaultmidpunct}
{\mcitedefaultendpunct}{\mcitedefaultseppunct}\relax
\EndOfBibitem
\bibitem[Mullinax and Noid(2009)Mullinax, and Noid]{Mullinax2009}
Mullinax,~J.~W.; Noid,~W.~G. Generalized Yvon-Born-Green Theory for Molecular
  Systems. \emph{Phys. Rev. Lett.} \textbf{2009}, \emph{103}\relax
\mciteBstWouldAddEndPuncttrue
\mciteSetBstMidEndSepPunct{\mcitedefaultmidpunct}
{\mcitedefaultendpunct}{\mcitedefaultseppunct}\relax
\EndOfBibitem
\bibitem[Wu \latin{et~al.}(2020)Wu, K\"{o}hler, and Noe]{wu2020snf}
Wu,~H.; K\"{o}hler,~J.; Noe,~F. Stochastic Normalizing Flows. Advances in
  Neural Information Processing Systems. 2020; pp 5933--5944\relax
\mciteBstWouldAddEndPuncttrue
\mciteSetBstMidEndSepPunct{\mcitedefaultmidpunct}
{\mcitedefaultendpunct}{\mcitedefaultseppunct}\relax
\EndOfBibitem
\bibitem[Ciccotti \latin{et~al.}(2008)Ciccotti, Lelievre, and
  Vanden-Eijnden]{Ciccotti_CommunPureApplMath2008}
Ciccotti,~G.; Lelievre,~T.; Vanden-Eijnden,~E. Projection of diffusions on
  submanifolds: Application to mean force computation. \emph{Commun. Pure Appl.
  Math.} \textbf{2008}, \emph{61}, 371--408\relax
\mciteBstWouldAddEndPuncttrue
\mciteSetBstMidEndSepPunct{\mcitedefaultmidpunct}
{\mcitedefaultendpunct}{\mcitedefaultseppunct}\relax
\EndOfBibitem
\bibitem[Kalligiannaki \latin{et~al.}(2015)Kalligiannaki, Harmandaris,
  Katsoulakis, and Plech{\'{a}}{\v{c}}]{Kalligiannaki2015}
Kalligiannaki,~E.; Harmandaris,~V.; Katsoulakis,~M.~A.; Plech{\'{a}}{\v{c}},~P.
  The geometry of generalized force matching and related information metrics in
  coarse-graining of molecular systems. \emph{J. Chem. Phys.} \textbf{2015},
  \emph{143}, 084105\relax
\mciteBstWouldAddEndPuncttrue
\mciteSetBstMidEndSepPunct{\mcitedefaultmidpunct}
{\mcitedefaultendpunct}{\mcitedefaultseppunct}\relax
\EndOfBibitem
\bibitem[Davtyan \latin{et~al.}(2016)Davtyan, Voth, and Andersen]{Davtyan2016}
Davtyan,~A.; Voth,~G.~A.; Andersen,~H.~C. Dynamic force matching: Construction
  of dynamic coarse-grained models with realistic short time dynamics and
  accurate long time dynamics. \emph{J. Chem. Phys.} \textbf{2016}, \emph{145},
  224107\relax
\mciteBstWouldAddEndPuncttrue
\mciteSetBstMidEndSepPunct{\mcitedefaultmidpunct}
{\mcitedefaultendpunct}{\mcitedefaultseppunct}\relax
\EndOfBibitem
\bibitem[LeCun \latin{et~al.}(2007)LeCun, Chopra, Hadsell, Ranzato, and
  Huang]{lecun2007}
LeCun,~Y.; Chopra,~S.; Hadsell,~R.; Ranzato,~M.; Huang,~F. \emph{{Predicting
  Structured Data}}; The MIT Press, 2007\relax
\mciteBstWouldAddEndPuncttrue
\mciteSetBstMidEndSepPunct{\mcitedefaultmidpunct}
{\mcitedefaultendpunct}{\mcitedefaultseppunct}\relax
\EndOfBibitem
\bibitem[Huang \latin{et~al.}(2020)Huang, Dinh, and
  Courville]{huang2020augmented}
Huang,~C.-W.; Dinh,~L.; Courville,~A. Augmented normalizing flows: Bridging the
  gap between generative flows and latent variable models. \emph{arXiv preprint
  arXiv:2002.07101} \textbf{2020}, \relax
\mciteBstWouldAddEndPunctfalse
\mciteSetBstMidEndSepPunct{\mcitedefaultmidpunct}
{}{\mcitedefaultseppunct}\relax
\EndOfBibitem
\bibitem[Chen \latin{et~al.}(2020)Chen, Lu, Chenli, Zhu, and
  Tian]{chen2020vflow}
Chen,~J.; Lu,~C.; Chenli,~B.; Zhu,~J.; Tian,~T. Vflow: More expressive
  generative flows with variational data augmentation. International Conference
  on Machine Learning. 2020; pp 1660--1669\relax
\mciteBstWouldAddEndPuncttrue
\mciteSetBstMidEndSepPunct{\mcitedefaultmidpunct}
{\mcitedefaultendpunct}{\mcitedefaultseppunct}\relax
\EndOfBibitem
\bibitem[Cornish \latin{et~al.}(2020)Cornish, Caterini, Deligiannidis, and
  Doucet]{cornish2020relaxing}
Cornish,~R.; Caterini,~A.; Deligiannidis,~G.; Doucet,~A. Relaxing bijectivity
  constraints with continuously indexed normalising flows. International
  Conference on Machine Learning. 2020; pp 2133--2143\relax
\mciteBstWouldAddEndPuncttrue
\mciteSetBstMidEndSepPunct{\mcitedefaultmidpunct}
{\mcitedefaultendpunct}{\mcitedefaultseppunct}\relax
\EndOfBibitem
\bibitem[Brofos \latin{et~al.}(2021)Brofos, Brubaker, and
  Lederman]{brofos2021manifold}
Brofos,~J.; Brubaker,~M.~A.; Lederman,~R.~R. Manifold Density Estimation via
  Generalized Dequantization. ICML Workshop on Invertible Neural Networks,
  Normalizing Flows, and Explicit Likelihood Models. 2021\relax
\mciteBstWouldAddEndPuncttrue
\mciteSetBstMidEndSepPunct{\mcitedefaultmidpunct}
{\mcitedefaultendpunct}{\mcitedefaultseppunct}\relax
\EndOfBibitem
\bibitem[Tobias and Brooks~III(1992)Tobias, and
  Brooks~III]{Tobias_JPhysChem1992}
Tobias,~D.~J.; Brooks~III,~C.~L. Conformational equilibrium in the alanine
  dipeptide in the gas phase and aqueous solution: A comparison of theoretical
  results. \emph{J. Phys. Chem.} \textbf{1992}, \emph{96}, 3864--3870\relax
\mciteBstWouldAddEndPuncttrue
\mciteSetBstMidEndSepPunct{\mcitedefaultmidpunct}
{\mcitedefaultendpunct}{\mcitedefaultseppunct}\relax
\EndOfBibitem
\bibitem[Clementi \latin{et~al.}(2003)Clementi, Garc{\i}a, and
  Onuchic]{Clementi2003}
Clementi,~C.; Garc{\i}a,~A.~E.; Onuchic,~J.~N. Interplay Among Tertiary
  Contacts, Secondary Structure Formation and Side-chain Packing in the Protein
  Folding Mechanism: All-atom Representation Study of Protein L. \emph{J. Mol.
  Biol.} \textbf{2003}, \emph{326}, 933--954\relax
\mciteBstWouldAddEndPuncttrue
\mciteSetBstMidEndSepPunct{\mcitedefaultmidpunct}
{\mcitedefaultendpunct}{\mcitedefaultseppunct}\relax
\EndOfBibitem
\bibitem[Naritomi and Fuchigami(2011)Naritomi, and Fuchigami]{Naritomi2011}
Naritomi,~Y.; Fuchigami,~S. Slow dynamics in protein fluctuations revealed by
  time-structure based independent component analysis: The case of domain
  motions. \emph{The Journal of Chemical Physics} \textbf{2011}, \emph{134},
  065101, TICA pioneer 3/3\relax
\mciteBstWouldAddEndPuncttrue
\mciteSetBstMidEndSepPunct{\mcitedefaultmidpunct}
{\mcitedefaultendpunct}{\mcitedefaultseppunct}\relax
\EndOfBibitem
\bibitem[P{\'e}rez-Hern{\'a}ndez \latin{et~al.}(2013)P{\'e}rez-Hern{\'a}ndez,
  Paul, Giorgino, De~Fabritiis, and No{\'e}]{Perez_JChemPhys2013}
P{\'e}rez-Hern{\'a}ndez,~G.; Paul,~F.; Giorgino,~T.; De~Fabritiis,~G.;
  No{\'e},~F. Identification of slow molecular order parameters for Markov
  model construction. \emph{J. Chem. Phys.} \textbf{2013}, \emph{139},
  07B604\_1\relax
\mciteBstWouldAddEndPuncttrue
\mciteSetBstMidEndSepPunct{\mcitedefaultmidpunct}
{\mcitedefaultendpunct}{\mcitedefaultseppunct}\relax
\EndOfBibitem
\bibitem[Schwantes and Pande(2013)Schwantes, and
  Pande]{Schwantes_JChemTheoryComput2013}
Schwantes,~C.~R.; Pande,~V.~S. Improvements in Markov state model construction
  reveal many non-native interactions in the folding of NTL9. \emph{J. Chem.
  Theory Comput.} \textbf{2013}, \emph{9}, 2000--2009\relax
\mciteBstWouldAddEndPuncttrue
\mciteSetBstMidEndSepPunct{\mcitedefaultmidpunct}
{\mcitedefaultendpunct}{\mcitedefaultseppunct}\relax
\EndOfBibitem
\bibitem[Harmandaris \latin{et~al.}(2016)Harmandaris, Kalligiannaki,
  Katsoulakis, and Plech{\'a}{\v{c}}]{harmandaris2016path}
Harmandaris,~V.; Kalligiannaki,~E.; Katsoulakis,~M.; Plech{\'a}{\v{c}},~P.
  Path-space variational inference for non-equilibrium coarse-grained systems.
  \emph{J. Comput. Phys.} \textbf{2016}, \emph{314}, 355--383\relax
\mciteBstWouldAddEndPuncttrue
\mciteSetBstMidEndSepPunct{\mcitedefaultmidpunct}
{\mcitedefaultendpunct}{\mcitedefaultseppunct}\relax
\EndOfBibitem
\bibitem[K{\"o}hler \latin{et~al.}(2021)K{\"o}hler, Kr{\"a}mer, and
  No{\'e}]{kohler2021smooth}
K{\"o}hler,~J.; Kr{\"a}mer,~A.; No{\'e},~F. Smooth Normalizing Flows. Advances
  in Neural Information Processing Systems. 2021\relax
\mciteBstWouldAddEndPuncttrue
\mciteSetBstMidEndSepPunct{\mcitedefaultmidpunct}
{\mcitedefaultendpunct}{\mcitedefaultseppunct}\relax
\EndOfBibitem
\bibitem[Kingma and Welling(2014)Kingma, and Welling]{Kingma2014}
Kingma,~D.~P.; Welling,~M. {Auto-Encoding Variational Bayes}. 2nd International
  Conference on Learning Representations, {ICLR} 2014, Banff, AB, Canada, April
  14-16, 2014, Conference Track Proceedings. 2014\relax
\mciteBstWouldAddEndPuncttrue
\mciteSetBstMidEndSepPunct{\mcitedefaultmidpunct}
{\mcitedefaultendpunct}{\mcitedefaultseppunct}\relax
\EndOfBibitem
\bibitem[Nielsen \latin{et~al.}(2020)Nielsen, Jaini, Hoogeboom, Winther, and
  Welling]{nielsen2020survae}
Nielsen,~D.; Jaini,~P.; Hoogeboom,~E.; Winther,~O.; Welling,~M. Survae flows:
  Surjections to bridge the gap between vaes and flows. Advances in Neural
  Information Processing Systems. 2020\relax
\mciteBstWouldAddEndPuncttrue
\mciteSetBstMidEndSepPunct{\mcitedefaultmidpunct}
{\mcitedefaultendpunct}{\mcitedefaultseppunct}\relax
\EndOfBibitem
\bibitem[Sch\"{u}tt \latin{et~al.}(2018)Sch\"{u}tt, Sauceda, Kindermans,
  Tkatchenko, and M\"{u}ller]{Schutt_JChemPhys2018}
Sch\"{u}tt,~K.~T.; Sauceda,~H.~E.; Kindermans,~P.-J.; Tkatchenko,~A.;
  M\"{u}ller,~K.-R. {SchNet {\textendash} A deep learning architecture for
  molecules and materials}. \emph{J. Chem. Phys.} \textbf{2018}, \emph{148},
  241722\relax
\mciteBstWouldAddEndPuncttrue
\mciteSetBstMidEndSepPunct{\mcitedefaultmidpunct}
{\mcitedefaultendpunct}{\mcitedefaultseppunct}\relax
\EndOfBibitem
\bibitem[Kov{\'{a}}cs \latin{et~al.}(2021)Kov{\'{a}}cs, Oord, Kucera, Allen,
  Cole, Ortner, and Cs{\'{a}}nyi]{Kovacs2021}
Kov{\'{a}}cs,~D.~P.; Oord,~C. V.~D.; Kucera,~J.; Allen,~A.~E.; Cole,~D.~J.;
  Ortner,~C.; Cs{\'{a}}nyi,~G. {Linear Atomic Cluster Expansion Force Fields
  for Organic Molecules: Beyond RMSE}. \emph{J. Chem. Theory Comput.}
  \textbf{2021}, \emph{17}, 7696--7711\relax
\mciteBstWouldAddEndPuncttrue
\mciteSetBstMidEndSepPunct{\mcitedefaultmidpunct}
{\mcitedefaultendpunct}{\mcitedefaultseppunct}\relax
\EndOfBibitem
\bibitem[Thomas \latin{et~al.}(2018)Thomas, Smidt, Kearnes, Yang, Li, Kohlhoff,
  and Riley]{Thomas2018}
Thomas,~N.; Smidt,~T.; Kearnes,~S.; Yang,~L.; Li,~L.; Kohlhoff,~K.; Riley,~P.
  {Tensor field networks: Rotation- and translation-equivariant neural networks
  for 3D point clouds}. \emph{arXiv preprint arXiv:1802.08219} \textbf{2018},
  \relax
\mciteBstWouldAddEndPunctfalse
\mciteSetBstMidEndSepPunct{\mcitedefaultmidpunct}
{}{\mcitedefaultseppunct}\relax
\EndOfBibitem
\bibitem[Klicpera \latin{et~al.}(2020)Klicpera, Gro{\ss}, and
  G{\"u}nnemann]{Klicpera2020}
Klicpera,~J.; Gro{\ss},~J.; G{\"u}nnemann,~S. Directional Message Passing for
  Molecular Graphs. International Conference on Learning Representations
  (ICLR). 2020\relax
\mciteBstWouldAddEndPuncttrue
\mciteSetBstMidEndSepPunct{\mcitedefaultmidpunct}
{\mcitedefaultendpunct}{\mcitedefaultseppunct}\relax
\EndOfBibitem
\bibitem[Batzner \latin{et~al.}(2021)Batzner, Musaelian, Sun, Geiger, Mailoa,
  Kornbluth, Molinari, Smidt, and Kozinsky]{Batzner2021}
Batzner,~S.; Musaelian,~A.; Sun,~L.; Geiger,~M.; Mailoa,~J.~P.; Kornbluth,~M.;
  Molinari,~N.; Smidt,~T.~E.; Kozinsky,~B. {E(3)-Equivariant Graph Neural
  Networks for Data-Efficient and Accurate Interatomic Potentials}. \emph{arXiv
  preprint arXiv:2101.03164} \textbf{2021}, \relax
\mciteBstWouldAddEndPunctfalse
\mciteSetBstMidEndSepPunct{\mcitedefaultmidpunct}
{}{\mcitedefaultseppunct}\relax
\EndOfBibitem
\bibitem[Unke and Meuwly(2019)Unke, and Meuwly]{Unke2019}
Unke,~O.~T.; Meuwly,~M. {PhysNet: A Neural Network for Predicting Energies,
  Forces, Dipole Moments, and Partial Charges}. \emph{J. Chem. Theory Comput.}
  \textbf{2019}, \emph{15}, 3678--3693\relax
\mciteBstWouldAddEndPuncttrue
\mciteSetBstMidEndSepPunct{\mcitedefaultmidpunct}
{\mcitedefaultendpunct}{\mcitedefaultseppunct}\relax
\EndOfBibitem
\bibitem[Smith \latin{et~al.}(2017)Smith, Isayev, and Roitberg]{Smith2017}
Smith,~J.~S.; Isayev,~O.; Roitberg,~A.~E. {ANI-1: an extensible neural network
  potential with DFT accuracy at force field computational cost}. \emph{Chem.
  Sci.} \textbf{2017}, \emph{8}, 3192--3203\relax
\mciteBstWouldAddEndPuncttrue
\mciteSetBstMidEndSepPunct{\mcitedefaultmidpunct}
{\mcitedefaultendpunct}{\mcitedefaultseppunct}\relax
\EndOfBibitem
\bibitem[Chen \latin{et~al.}(2021)Chen, Kr{\"{a}}mer, Charron, Husic, Clementi,
  and No{\'{e}}]{Chen2021}
Chen,~Y.; Kr{\"{a}}mer,~A.; Charron,~N.~E.; Husic,~B.~E.; Clementi,~C.;
  No{\'{e}},~F. {Machine learning implicit solvation for molecular dynamics}.
  \emph{J. Chem. Phys.} \textbf{2021}, \emph{155}, 084101\relax
\mciteBstWouldAddEndPuncttrue
\mciteSetBstMidEndSepPunct{\mcitedefaultmidpunct}
{\mcitedefaultendpunct}{\mcitedefaultseppunct}\relax
\EndOfBibitem
\bibitem[John and Cs{\'{a}}nyi(2017)John, and Cs{\'{a}}nyi]{John2017}
John,~S.~T.; Cs{\'{a}}nyi,~G. Many-Body Coarse-Grained Interactions Using
  Gaussian Approximation Potentials. \emph{The Journal of Physical Chemistry B}
  \textbf{2017}, \emph{121}, 10934--10949\relax
\mciteBstWouldAddEndPuncttrue
\mciteSetBstMidEndSepPunct{\mcitedefaultmidpunct}
{\mcitedefaultendpunct}{\mcitedefaultseppunct}\relax
\EndOfBibitem
\bibitem[{Schr\"odinger, LLC}(2015)]{PyMOL}
{Schr\"odinger, LLC}, The {PyMOL} Molecular Graphics System, Version~1.8.
  2015\relax
\mciteBstWouldAddEndPuncttrue
\mciteSetBstMidEndSepPunct{\mcitedefaultmidpunct}
{\mcitedefaultendpunct}{\mcitedefaultseppunct}\relax
\EndOfBibitem
\end{mcitethebibliography}
\providecommand{\latin}[1]{#1}
\makeatletter
\providecommand{\doi}
  {\begingroup\let\do\@makeother\dospecials
  \catcode`\{=1 \catcode`\}=2 \doi@aux}
\providecommand{\doi@aux}[1]{\endgroup\texttt{#1}}
\makeatother
\providecommand*\mcitethebibliography{\thebibliography}
\csname @ifundefined\endcsname{endmcitethebibliography}
  {\let\endmcitethebibliography\endthebibliography}{}

\end{document}

% --- supplement: si.tex ---

\section{Methods}
We describe how we can use and evaluate this approach when modeling the CG potential of given protein systems.

\subsection{Models}

As a first step, we have to decide how we choose our CG coordinates. In the second step, we need to design a suitable flow transformation $\Phi(\cdot; \bm\theta_{\mathrm{flow}})$ for the density estimation part. Finally, we need to choose an unconstrained model $V(\bm r; \bm\theta_{\mathrm{pot}})$ that we can train against the forces of the flow-induced potential $\mathcal{V}(\bm r, \bm \eta; \bm\theta_{\mathrm{flow}})$.

\subsubsection{Coarse-graining operator $\bm \Xi$}

Our coarse-graining operator $\bm \Xi \colon \mathbb{R}^{3 N} \rightarrow \mathbb{R}^{3 n}$ projects an $N$-atom peptide conformation onto a subset of $n$ of its backbone atoms. For smaller systems (e.g., capped alanine) we can choose backbone carbons and nitrogens. For larger systems (e.g., fast-folder proteins) we project all-atom conformations onto $C_\alpha$ beads of the backbone (see Fig.~2a in the main text). Other choices of CG mapping that can be described by a linear operator $\bm \Xi$, such as placing a bead on the center of mass for a group of atoms, are also compatible with our methods.

\subsubsection{Flow potential $\mathcal{V}(\bm r, \bm \eta; \bm\theta_{\mathrm{flow}})$}

The inverse flow transformation $\Phi^{-1}(\cdot;\bm\theta_{\mathrm{flow}})$ required to define $\mathcal{V}(\bm r, \bm \eta; \bm\theta_{\mathrm{flow}})$ in Eq.~(8) in the main text is composed by a fixed coordinate transformation into an internal coordinate (IC) representation followed by a trainable normalizing flow (see Fig.~2a).
For the IC transformation, we follow Ref.~\citenum{noe2019boltzmann} and project the Cartesian CG degrees of freedom onto bond lengths, angles, and dihedral torsions of adjacent CG beads.
The normalizing flow follows the architecture in Ref.~\citenum{kohler2021smooth} and consists of coupling layers \cite{dinh2014nice, dinh2016rnvp} where we transform bonds and angles and torsions using either spline or smooth hypertoric transforms \cite{durkan2019neural, kohler2021smooth}. 
We increase expressivity and relax bijectivity constraints of the normalizing flow by introducing $k$ additional latent variables $\bm \eta$ leading to the variational objective Eq.~(7) in the main text.
We choose a factorized base density $q(\bm z_0, \bm z_1) = q(\bm z_0) \cdot q(\bm z_1)$ given by a uniform density $q(\bm z_0) = \mathcal{U}(0, 1)$ for the IC degrees of freedom and 
a isotropic normal density $q(\bm z_1) = \mathcal{N}(0, 1)$ for the latent variables $\bm \eta$.
Finally, we choose the variational density $\tilde \nu(\bm \eta | \bm r) = \tilde \nu(\bm \eta) = \mathcal{N}(0, 1)$ to be an independent isotropic Gaussian, as well. Additional technical details of the implementation of the normalizing flow can be found in Section C.1.

\subsubsection{Unconstrained CG potential $V(\bm r; \bm\theta_{\mathrm{pot}})$}
The unconstrained potential $V(\bm r; \bm\theta_{\mathrm{pot}})$ in the second step follows the \emph{CGnet} architecture in Ref.~\citenum{wang2019machine}. This model transforms Euclidean coordinates of the CG beads into pairwise distances, angles and torsions (similar to the IC transformation introduced in the last paragraph). Then these features are fed into a fully connected neural network to output a scalar energy. Furthermore, the model adds simple repulsive and harmonic \textit{prior energy} terms to this scalar, which prevent steric clashes and bond breaking (see Fig.~2b). We extended the original architecture by introducing skip-connections \cite{he2016deep} and replaced \texttt{tanh}-activations with \texttt{silu}-activations \cite{Elfwing2018SigmoidWeightedLU} which both greatly improved results. Technical details of the implementation of this coarse-graining potential can be found in Section C.2.

\subsection{Training}
The flow potential can be trained on the trajectory data using the variational bound to the likelihood Eq.~(7) in the main text. 
During training, we monitor the 
validation likelihood. We stop training once we observe convergence and pick the checkpoint with the highest validation log-likelihood for later use.

Due to the constraints of a normalizing flow, some of the generated samples from the teacher model are not perfect.
For example, a small number of samples come with significantly larger force magnitudes than the rest of the samples and can disrupt the force-matching training of the CGnet.
We solve this with a rejection sampling scheme and filter flow samples with a set threshold on force magnitude (details in SI-Section D.2). Only few flow samples are removed in this step (Fig.~S4).
Additionally, there are flow samples that contain non-neighboring pairs with shorter distances than the minimum observed in the ground truth, to which lower weights are assigned via a free energy perturbation scheme. In this step a fraction of flow samples is removed that increases with system size, however this fraction is still around 50\% for the largest systems studied here (Fig.~S4). Given the efficiency of flow sampling, even acceptance rates around 1\% would not cause computational bottlenecks.

The unconstrained potential is then trained against the processed flow samples using the variational force-matching objective Eq.~(11) in the main text.
Detailed setup and explanation of the training can be found in Sections D.2 and D.3.

\subsection{CG sampling}
For the flow models, we draw independent samples in latent space according to the prior distributions and use the forward transformation of the flow to map them to CG coordinates.
As for the CGnet-based models, we perform MD simulations in the CG space, similarly as in Refs.~\citenum{wang2019machine} and~\citenum{husic2020coarse}
Except for the time step (5-fs for capped alanine and 2-fs for fast folding proteins), we keep the simulation parameters, such as the thermostat temperature and friction coefficient, consistent with the reference all-atom simulations. Note that there is no simple correspondence between the CG kinetics and thus timescale and the all-atom system~\cite{Nuske_JChemPhys2019}.

In addition, two methods are used to facilitate sampling:~batched simulations from different starting structures and parallel tempering. The starting structures are sampled according to the equilibrium distribution of the all-atom simulations (following Ref.~\citenum{wang2019machine}). The parallel tempering employs two replicas at temperatures 300~K and 450~K for CG alanine and three replicas at temperatures $T_0$, $\sqrt{T_0 \times \text{500~K}}$ and 500~K for fast folders, in which $T_0$ is the simulation temperature used for the reference all-atom data set (see Ref.~\citenum{Lindorff_Science2011}). The conformations at the reference temperature are recorded every 100 or 250 time steps for CG alanine and fast folding proteins, respectively. Other aspects of conducting the simulations can be found in Section D.4.

\section{Proofs and Derivations}

\subsection{Fisher-Identity}
For any sufficiently smooth probability density $p(\bm r, \bm \eta)$ on $\mathbb{R}^{d}$ we have the \textit{Fisher-identity}~\cite{douc2014nonlinear}:
\begin{lemma}
\label{lemma:fisher-identity}
    \begin{align}
        \nabla_{\bm x} \log p(\bm x) = \mathbb{E}_{\bm y \sim p(\bm y|\bm x)}\left[ \nabla_{\bm x} \log p(\bm x|\bm y) \right].
    \end{align}
\end{lemma}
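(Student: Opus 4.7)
The plan is to derive the identity directly from the marginalization $p(\bm x) = \int p(\bm x, \bm y) \, d\bm y$ via a log-derivative trick. First I would differentiate both sides with respect to $\bm x$, swapping the gradient with the integral---this is the only step that actually uses the ``sufficiently smooth'' hypothesis---to get $\nabla_{\bm x} p(\bm x) = \int \nabla_{\bm x} p(\bm x, \bm y) \, d\bm y$, and then divide by $p(\bm x)$ on both sides so that the left-hand side becomes $\nabla_{\bm x} \log p(\bm x)$.

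Next I would apply the elementary identity $\nabla_{\bm x} p(\bm x, \bm y) = p(\bm x, \bm y)\, \nabla_{\bm x} \log p(\bm x, \bm y)$ inside the integrand and use Bayes' rule $p(\bm x, \bm y)/p(\bm x) = p(\bm y \mid \bm x)$ to rewrite the result as
\begin{align}
\nabla_{\bm x} \log p(\bm x) = \int p(\bm y \mid \bm x)\, \nabla_{\bm x} \log p(\bm x, \bm y) \, d\bm y.
\end{align}
Then I would factor $\log p(\bm x, \bm y) = \log p(\bm x \mid \bm y) + \log p(\bm y)$ and observe that the second term has vanishing $\bm x$-gradient, so $\nabla_{\bm x} \log p(\bm x, \bm y) = \nabla_{\bm x} \log p(\bm x \mid \bm y)$. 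Substituting this and recognizing the integral against $p(\bm y \mid \bm x)$ as an expectation over $\bm y \sim p(\bm y \mid \bm x)$ yields the claimed identity.

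The main---and really only---obstacle is the interchange of $\nabla_{\bm x}$ with the integral over $\bm y$; this requires a standard dominated-convergence argument controlling $\|\nabla_{\bm x} p(\bm x, \bm y)\|$ by an integrable dominating function in a neighborhood of $\bm x$. Under the ``sufficiently smooth'' hypothesis this is routine, so I would simply invoke it rather than spelling out the domination explicitly. All remaining steps are purely algebraic manipulations of log-densities and the definition of conditional expectation.
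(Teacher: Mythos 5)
Your proposal is correct and follows essentially the same route as the paper's proof: marginalize, interchange $\nabla_{\bm x}$ with the integral, apply the log-derivative trick, use Bayes' rule, and drop the $\nabla_{\bm x}\log p(\bm y)$ term. The only difference is that you explicitly flag the dominated-convergence justification for the interchange, which the paper absorbs into the ``sufficiently smooth'' hypothesis without comment.
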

\begin{proof}
\begin{align}
    \nabla_{\bm x} \log p(\bm x) 
                                &= \frac{1}{p(\bm x)} \nabla_{\bm x} p(\bm x) \\
                                &= \frac{1}{p(\bm x)} \nabla_{\bm x} \int d\bm y ~ p(\bm x, \bm y) \\
                                &= \frac{1}{p(\bm x)} \int d\bm y ~ \nabla_{\bm x} p(\bm x, \bm y) \\
                                &= \frac{1}{p(\bm x)} \int d\bm y ~ p(\bm x, \bm y) \nabla_{\bm x} \log p(\bm x, \bm y) \\
                                &= \int d\bm y ~ \frac{p(\bm y, \bm x)}{p(\bm x)} \left[ \nabla_{\bm x} \log p(\bm x|\bm y) + \underbrace{\nabla_{\bm x} \log p(\bm y)}_{=0} \right]\\
                                &= \int d\bm y ~ p(\bm y|\bm x) \nabla_{\bm x} \log p(\bm x|\bm y) \\
                                &= \mathbb{E}_{\bm y \sim p(\bm y|\bm x)}\left[ \nabla_{\bm x} \log p(\bm x|\bm y) \right].
\end{align}
\end{proof}

\subsection{Variational bound on the likelihood of the latent-variable model}

\begin{align}
    \mathcal{L}(\bm\theta_{\mathrm{flow}})
    &= \mathbb{E}_{\bm r \sim \mathcal{D},~\bm \eta \sim \tilde\nu(\bm \eta|\bm r)}\left[-\log p(\bm r, \bm \eta; \bm\theta_{\mathrm{flow}}) \right]\\
    &= -\frac{1}{|\mathcal{D}|}\sum_{\bm r \in \mathcal{D}} \int d\bm\eta ~ \tilde\nu(\bm \eta | \bm r) \log p(\bm r, \bm \eta; \bm\theta_{\mathrm{flow}}) \\
    &= -\frac{1}{|\mathcal{D}|}\sum_{\bm r \in \mathcal{D}} \int d\bm\eta ~ \tilde\nu( \bm \eta | \bm r) (\log p(\bm r; \bm\theta_{\mathrm{flow}}) + \log p(\bm \eta | \bm r; \bm\theta_{\mathrm{flow}})) \\
    &= -\frac{1}{|\mathcal{D}|}\sum_{\bm r \in \mathcal{D}} \int d\bm\eta ~  \tilde\nu( \bm \eta | \bm r) \log p(\bm r; \bm\theta_{\mathrm{flow}}) 
      -\frac{1}{|\mathcal{D}|}\sum_{\bm r \in \mathcal{D}} \int d\bm\eta ~ \tilde\nu( \bm \eta | \bm r)  \log p(\bm \eta | \bm r; \bm\theta_{\mathrm{flow}}) \\
      &= -\frac{1}{|\mathcal{D}|}\sum_{\bm r \in \mathcal{D}}  \log p(\bm r; \bm\theta_{\mathrm{flow}}) +
     \mathbb{E}_{\bm r \sim \mathcal{D}}\left[H\left[\tilde\nu(\cdot|\bm r), p(\cdot|\bm r;\bm\theta_{\mathrm{flow}})\right]\right] \\
     &= \mathbb{E}_{\bm r \sim \mathcal{D}}\left[-\log p(\bm r; \bm\theta_{\mathrm{flow}}) \right] + \mathbb{E}_{\bm r \sim \mathcal{D}}\left[\underbrace{H\left[\tilde\nu(\cdot|\bm r), p(\cdot|\bm r;\bm\theta_{\mathrm{flow}})\right]}_{\geq 0}\right]\\
     &\geq \mathbb{E}_{\bm r \sim \mathcal{D}}\left[-\log p(\bm r; \bm\theta_{\mathrm{flow}}) \right]
\end{align}
where $H(\cdot, \cdot)$ denotes the cross-entropy.

\subsection{Consistency of teacher-student force-matching}
We first note that
\begin{align}
    \bm \tilde{f}(\bm r, \bm \eta; \bm\theta_{\mathrm{flow}}) 
    &= \nabla_{\bm r} \log p(\bm r, \bm \eta; \bm\theta_{\mathrm{flow}}) \\
    &= \nabla_{\bm r} \log p(\bm r | \bm \eta; \bm\theta_{\mathrm{flow}}) + \nabla_{\bm r} \log p(\bm \eta; \bm\theta_{\mathrm{flow}})\\
    &= \nabla_{\bm r} \log p(\bm r | \bm \eta; \bm\theta_{\mathrm{flow}}).
\end{align}
Combining Jensen's inequality with Lemma~\ref{lemma:fisher-identity}, we obtain
\begin{align}
    \mathcal{L}(\bm\theta_{\mathrm{pot}}) 
    &= \mathbb{E}_{(\bm r, \bm \eta) \sim p(\bm r, \bm \eta; \bm\theta_{\mathrm{flow}})}\left[\left\| \nabla_{\bm r} V(\bm r; \bm\theta_{\mathrm{pot}}) + \bm \tilde{f}(\bm r, \bm \eta; \bm\theta_{\mathrm{flow}})\right\|_2^2\right] \\
    &= \mathbb{E}_{(\bm r, \bm \eta) \sim p(\bm r, \bm \eta; \bm\theta_{\mathrm{flow}})}\left[\left\| \nabla_{\bm r} V(\bm r; \bm\theta_{\mathrm{pot}}) + \nabla_{\bm r} \log p(\bm r | \bm \eta; \bm\theta_{\mathrm{flow}}))\right\|_2^2\right] \\
    &= \mathbb{E}_{\bm r \sim p(\bm r; \bm\theta_{\mathrm{flow}})}\left[\mathbb{E}_{\bm \eta \sim p(\bm \eta | \bm r; \bm\theta_{\mathrm{flow}})}\left[\left\| \nabla_{\bm r} V(\bm r; \bm\theta_{\mathrm{pot}}) + \nabla_{\bm r} \log p(\bm r | \bm \eta; \bm\theta_{\mathrm{flow}}))\right\|_2^2\right]\right] \\
    &\geq \mathbb{E}_{\bm r \sim p(\bm r; \bm\theta_{\mathrm{flow}})}
    \left[ \left\|  
        \nabla_{\bm r} V(\bm r; \bm\theta_{\mathrm{pot}}) + \mathbb{E}_{\bm \eta \sim p(\bm \eta | \bm r; \bm\theta_{\mathrm{flow}})}\left[ \nabla_{\bm r} \log p(\bm r | \bm \eta; \bm\theta_{\mathrm{flow}}))
    \right] \right\|_2^2 \right]\\
    &= \mathbb{E}_{\bm r \sim p(\bm r; \bm\theta_{\mathrm{flow}})}
    \left[ \left\|  
        \nabla_{\bm r} V(\bm r; \bm\theta_{\mathrm{pot}}) + \nabla_{\bm r} \log p(\bm r; \bm\theta_{\mathrm{flow}}))
     \right\|_2^2 \right]\\
     &= \mathbb{E}_{\bm r \sim p(\bm r; \bm\theta_{\mathrm{flow}})}
    \left[ \left\|  
        \nabla_{\bm r} V(\bm r; \bm\theta_{\mathrm{pot}}) + \bm f(\bm r; \bm\theta_{\mathrm{flow}}))
     \right\|_2^2 \right].
\end{align}
Furthermore, we have 
\begin{align}
    \mathbb{E}_{(\bm r, \bm \eta) \sim p(\bm r, \bm \eta; \bm\theta_{\mathrm{flow}})}\left[ \nabla_{\bm r} V(\bm r; \bm\theta_{\mathrm{pot}})^T \bm \tilde f(\bm r, \bm \eta; \bm\theta_{\mathrm{flow}}) \right]
    &= 
    \mathbb{E}_{\bm r \sim p(\bm r; \bm\theta_{\mathrm{flow}})}\left[ \nabla_{\bm r} V(\bm r; \bm\theta_{\mathrm{pot}})^T \bm f(\bm r; \bm\theta_{\mathrm{flow}}) \right].
\end{align}
From which we can derive
\begin{align}
    & \mathbb{E}_{(\bm r, \bm \eta) \sim p(\bm r, \bm \eta; \bm\theta_{\mathrm{flow}})}\left[\left\| \nabla_{\bm r} V(\bm r; \bm\theta_{\mathrm{pot}}) + \bm \tilde f(\bm r, \bm \eta; \bm\theta_{\mathrm{flow}})\right\|_2^2\right] - \mathbb{E}_{\bm r \sim p(\bm r; \bm\theta_{\mathrm{flow}})}
    \left[ \left\|  
        \nabla_{\bm r} V(\bm r; \bm\theta_{\mathrm{pot}}) + \bm f(\bm r; \bm\theta_{\mathrm{flow}}))
     \right\|_2^2 \right]\\
     &= \mathbb{E}_{(\bm r, \bm \eta) \sim p(\bm r, \bm \eta; \bm\theta_{\mathrm{flow}})}\left[
        \|\bm \tilde f(\bm r, \bm \eta; \bm\theta_{\mathrm{flow}})\|_2^2 - \|\bm f(\bm r; \bm\theta_{\mathrm{flow}})\|_2^2
     \right].
\end{align}
Thus, the variational gap introduced by the latent variables does not depend on $\bm\theta_{\mathrm{pot}}$ which makes 
\begin{align}
    \mathbb{E}_{(\bm r, \bm \eta) \sim p(\bm r, \bm \eta; \bm\theta_{\mathrm{flow}})}\left[\nabla_{\bm\theta_{\mathrm{pot}}} \left\| \nabla_{\bm r} V(\bm r; \bm\theta_{\mathrm{pot}}) + \bm \tilde f(\bm r, \bm \eta; \bm\theta_{\mathrm{flow}})\right\|_2^2\right]
\end{align}
an unbiased gradient estimator of the force matching loss.

\subsection{Reweighting of the flow samples to incorporate a pairwise repulsion}
Assuming that the CG potential implied by a trained flow model $\mathcal{V}(\bm r, \bm \eta; \bm\theta_{\mathrm{flow}})$ lacks a repulsion term $U_\text{repul}$, we aim for obtaining samples that follows the Boltzmann distribution according to a more physical CG potential $\mathcal{V}(\bm r, \bm \eta; \bm\theta_{\mathrm{flow}}) + U_\text{repul}$ by free-energy perturbation~\cite{zwanzig1954high}.
For convenience, we let $U_0:=\mathcal{V}(\bm r, \bm \eta; \bm\theta_{\mathrm{flow}})$ and $\Delta U:=U_\text{repul}$ and omit the dependency on augmented variable $\bm\eta$ for the derivation below.

The possibility for observing a certain CG conformation $\bm r$, i.e., the Boltzmann weight according to a potential $U_0$, is
\begin{equation}
    p_0(\bm r) = \frac{1}{Z_0} e^{-\beta U_0(\bm r)},\;Z_0=\int_{\Gamma_0} e^{-\beta U_0(\bm r')} \mathop{}\!\mathrm{d}\bm r'.
\end{equation}
Similarly, we have the Boltzmann weight for $U_0 + \Delta U$:
\begin{equation}
    p_1(\bm r) = \frac{1}{Z_1} e^{-\beta \left[U_0(\bm r) + \Delta U(\bm r)\right]},\;Z_1=\int_{\Gamma_1} e^{-\beta \left[U_0(\bm r') + \Delta U(\bm r')\right]} \mathop{}\!\mathrm{d}\bm r'.
\end{equation}
When the $\Delta U$ is finite, the phase spaces $\Gamma_0$ and $\Gamma_1$ are identical, and we have:
\begin{align}
Z_1
&= \int_{\Gamma_0} e^{-\beta \left[U_0(\bm r') + \Delta U(\bm r')\right]} \mathop{}\!\mathrm{d}\bm r' \\
&= \int_{\Gamma_0} e^{-\beta U_0(\bm r')} \cdot e^{-\beta \Delta U(\bm r')} \mathop{}\!\mathrm{d}\bm r' \\
&= Z_0\int_{\Gamma_0} \frac{e^{-\beta U_0(\bm r')}}{Z_0} \cdot e^{-\beta \Delta U(\bm r')} \mathop{}\!\mathrm{d}\bm r' \\
&=Z_0\int_{\Gamma_0} p_0(\bm r') \cdot e^{-\beta \Delta U(\bm r')} \mathop{}\!\mathrm{d}\bm r'. \label{eq:z-ratio}
\end{align}
(Note that even when the repulsion energy goes to infinity on a zero-measure set where two particles overlap $\left\{\bm r|\exists j,k\in \text{particles},\;s.t.~\vec r_j=\vec r_k\right\}$, e.g., for a $1/d^p$ or LJ-like repulsion, the above equality still holds.)
For convenience, we introduce the thermodynamic averaging operator 
\begin{equation}
    \left\langle \phi(\bm r) \right\rangle_0 := \int_{\Gamma_0} p_0(\bm r') \phi(\bm r') \mathop{}\!\mathrm{d}\bm r',
\end{equation}
and then \eqref{eq:z-ratio} can be rewritten into the following:
\begin{equation}
    \frac{Z_0}{Z_1} = \frac{1}{\left\langle e^{-\beta \Delta U(\bm r)} \right\rangle_0}.
\end{equation}
Subsequently, we can evaluate $p_1$ if we know $p_0$ for any given conformation $\bm r$:
\begin{align}
    p_1(\bm r) &= Z_1^{-1} e^{-\beta U_0(\bm r)} e^{-\beta \Delta U(\bm r)} \\
    &= Z_1^{-1} \left[ Z_0p_0(\bm r) \right] e^{-\beta \Delta U(\bm r)} \\
    &= p_0(\bm r) \cdot \left[ \frac{Z_0}{Z_1} \cdot e^{-\beta \Delta U(\bm r)} \right] \\
    &= p_0(\bm r) \cdot \underline{ \frac{e^{-\beta \Delta U(\bm r)}}{\left\langle e^{-\beta \Delta U(\bm r)} \right\rangle_0} }.
\end{align}
The underlined expression is the reweighting factor that connects the original potential $U_0$ and the perturbed potential $U_0+\Delta U$.
Given a set of coordinates $\left\{ \bm r_i \right\}_N$ sampled from the flow (i.e., $U_0$), we can approximate the reweighting factor by:
\begin{align}
    w_i &:= \frac{e^{-\beta \Delta U(\bm r_i)}}{\left\langle e^{-\beta \Delta U(\bm r)} \right\rangle_0} \\
        &\approx \frac{e^{-\beta \Delta U(\bm r_i)}}{\left[\sum_l e^{-\beta \Delta U(\bm r_l)}\right]/N}. \label{eq:weight_expression}
\end{align}
This factor can be used for training the secondary CGnet model, including computing the marginal mean and standard deviations of the bond and angle dimensions (for prior energy parameters) as well as computing the force matching loss.
As an example, the reweighted force matching loss (cf.~Eq.~11 in maintext) over a set of flow-samples $\left\{ (\bm r_i, \tilde {\bm f_i}) \right\}_N$ becomes:
\begin{equation}
    \mathcal{L}(\bm\theta_{\mathrm{pot}}) \approx \sum_{i=1}^N \frac{w_i}{N} \left\| \tilde {\bm f'_i} + \nabla_{\bm r} V(\bm r; \bm\theta_{\mathrm{pot}})  \right\|^2_2,
\end{equation}
in which the weights $w_i$ is evaluated via Eq.~\ref{eq:weight_expression} over the whole set of flow samples and the new force corresponds to the modified potential:
\begin{equation}
    \tilde {\bm f'_i}(\bm r_i, \bm\eta_i) 
    = -\nabla_{\bm r}\left[ U_0 + \Delta U \right] (\bm r_i) 
    = \tilde {\bm f_i} -\nabla_{\bm r}\left[ \Delta U \right] (\bm r_i) 
    = \tilde {\bm f_i} + \bm f_\text{repul} (\bm r_i).
\end{equation}

\section{Additional details on the models}

\subsection{Normalizing flow architecture}
The normalizing flow architecture is sketched in Fig.~\ref{fig:normalizing-flow-flow-chart}. After transforming euclidean coordinates ($\mathbf{r}$) into internal coordinates (Fig.~\ref{fig:normalizing-flow-flow-chart}, \texttt{ic trafo})~\cite{noe2019boltzmann}, we apply an inverse CDF transform (Fig.~\ref{fig:normalizing-flow-flow-chart}, \texttt{iCDF}) onto bonds and angles, such that they are mapped into the unit interval. For the inverse CDF, we assume bonds and angles to follow a truncated Gaussian distribution where the parameters (truncation bounds, mean, variance) are estimated from the data (Fig.~\ref{fig:normalizing-flow-flow-chart}, \texttt{summary statistics}). These whitened ICs are then transformed together with the latent variables $\bm \eta$ into uniform and Gaussian densities, respectively, using the trainable flow, which is composed of three sub-flows transforming torsions, angles and bonds, respectively (Fig.~\ref{fig:normalizing-flow-flow-chart}, \texttt{torsion flow}, \texttt{angle flow}, \texttt{bond flow}). All three sub-flows consist of an alternating stack of coupling~\cite{dinh2014nice, dinh2016rnvp} and shuffling blocks. The coupling blocks transform one group of variables (e.g., one group of torsions) conditioned on a context (e.g., the other group of torsions and the latent variables). The transforms (Fig.~\ref{fig:normalizing-flow-flow-chart}, \texttt{smooth}, \texttt{spline} or \texttt{affine}) themselves require parameters, which are computed as the output of a trainable conditioner neural network (Fig.~\ref{fig:normalizing-flow-flow-chart}, \texttt{NN}) with the context as input.
The transformed variables have different domains. While bonds and angles are supported on the unit interval, the torsions are supported on a circle, and the latent variables on a real vector space. To satisfy topological constraints \cite{kohler2021smooth, rezende2020normalizing, wirnsberger2020targeted}, the following transforms are used:~for the latent variables we use simple affine transforms~\cite{dinh2016rnvp} (Fig.~\ref{fig:normalizing-flow-flow-chart}, \texttt{C2}). For bonds we rely on spline transforms~\cite{durkan2019neural} (Fig.~\ref{fig:normalizing-flow-flow-chart}, \texttt{C4}), while for angles and torsions we use smooth transforms~\cite{kohler2021smooth} (Fig.~\ref{fig:normalizing-flow-flow-chart}, \texttt{C1}, \texttt{C3}).
The conditioner neural network is a simple two layer dense net (Fig.~\ref{fig:normalizing-flow-flow-chart},  \texttt{NN}). 
If torsions are part of the context, we satisfy the periodic boundary condition by a projection onto a sin/cos basis before feeding them into the network (Fig.~\ref{fig:normalizing-flow-flow-chart}, \texttt{C1}, \texttt{C2}, \texttt{C3}).

Although the transformation of a certain feature in each coupling layer can be conditioned upon all other features (i.e., they serve as input to the \texttt{NN}), we generally used the following restricted version in our experiments. It is based on an assumption of hierarchical dependency among the internal coordinates and helps to reduce computational overhead.
Here we walk through the construction details of all layers in reverse order, i.e., from prior distribution to the actual IC distribution. Note that each feature channel can also be divided in halves to encourage mixing.
\begin{enumerate}
    \item Prior distributions, including the following channels
    \begin{enumerate}
        \item T (torsional), A (angular), B (bond-length channels): uniform distribution
        \item AUG (latent/augmentation channels): normal distribution
    \end{enumerate}

    \item Torsion flow (T1, T2 are two halves of an equal split of all torsion channels), consisting of two or four torsion blocks, each defined as
    \begin{enumerate}
        \item AUG $\leftarrow\text{coupling flow (affine)}-$ (T1, T2)
        \item T1 $\leftarrow\text{coupling flow (smooth)}-$ (T2, AUG)
        \item T2 $\leftarrow\text{coupling flow (smooth)}-$ (T1, AUG)
    \end{enumerate}

    \item Angle flow (A1, A2 are two halves of an equal split of all angle channels A), consisting of two angle blocks, each defined as
    \begin{enumerate}
        \item A1 $\leftarrow\text{coupling flow (smooth)}-$ (A2, T)
        \item A2 $\leftarrow\text{coupling flow (smooth)}-$ (A1, T)
    \end{enumerate}

    \item Bond flow, consisting of one bond block defined as
    \begin{enumerate}
        \item B $\leftarrow\text{coupling flow (spline)}-$ (A, T)
    \end{enumerate}

    \item IC handling, including
    \begin{enumerate}
        \item Inverse CDF transforms on B and A (to a truncated normal distribution)
        \item Inverse IC transformation
    \end{enumerate}
\end{enumerate}
The hyper-parameters for the experiments are given in Table~\ref{tab:hyper-params-flow}.
 
\begin{figure}
\centering
\includegraphics[width=\textwidth]{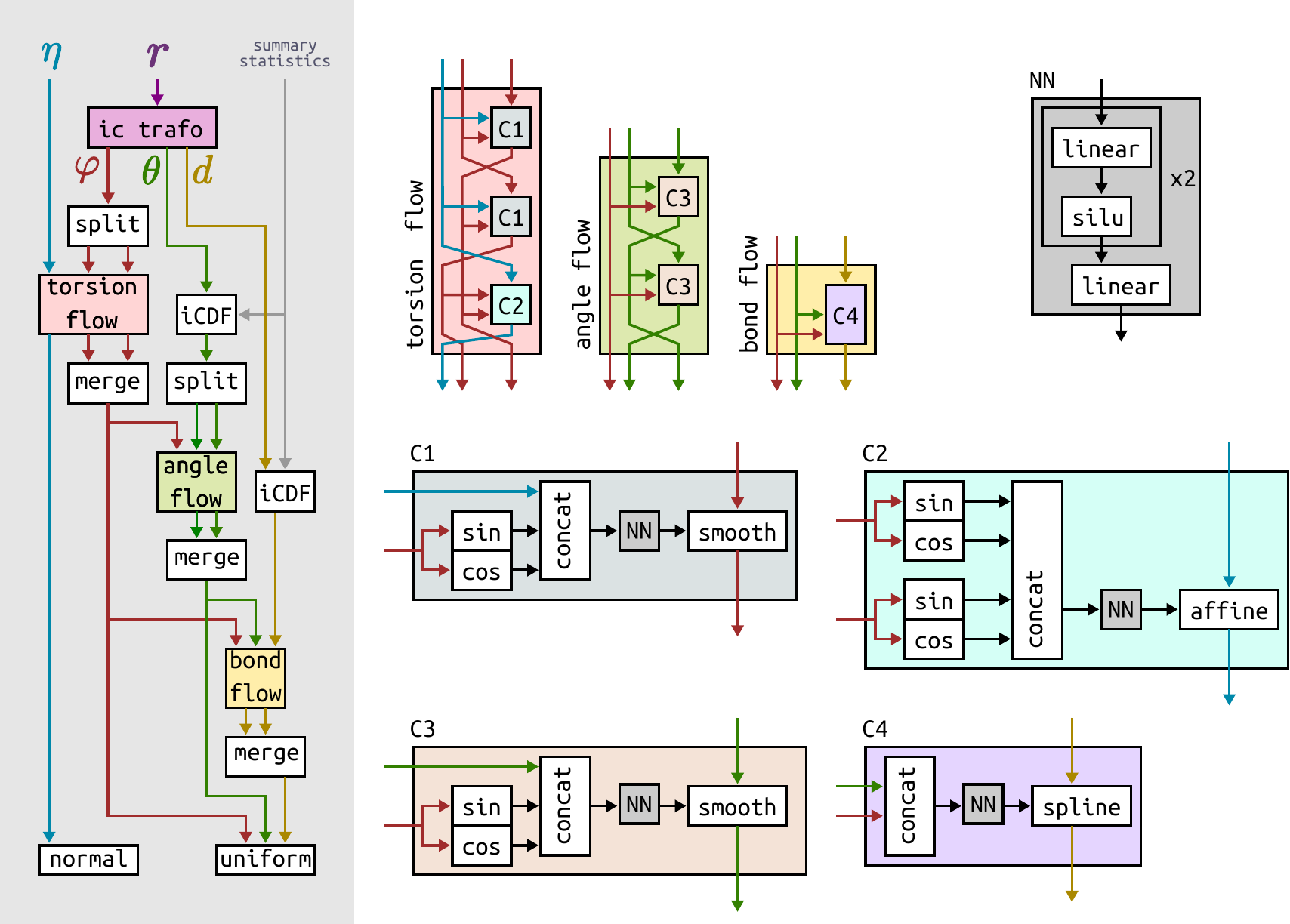}
\caption{Architecture of the normalizing flow.}
\label{fig:normalizing-flow-flow-chart}
\end{figure}

\begin{table}\centering
\caption{Flow-specific hyper-parameters used in the experiments}
\label{tab:hyper-params-flow}
\begin{tabular}{lrrrr}
System & Hidden units for \texttt{NN} & No.~of torsion blocks & No.~of latent dimensions \\
\midrule
Capped-Alanine & 128, 1024, 128 & 2 & 2\\
Fast-Folders & 128, 1024, 128 & 4 & 2\\
\bottomrule
\end{tabular}
\end{table}

\subsection{CGnet architecture}
The CGnet architecture~\cite{wang2019machine} is used as the ``student'' model that distills coarse graining knowledge from the flow samples. Essentially, all the internal coordinates (i.e., bond lengths, angles and dihedral angles) as well as the pairwise distances between nonbonded bead pairs are computed for input conformations. After a Z-score layer for whitening, they are fed into a multilayer perceptron (MLP) with hyperbolic tangent (\texttt{tanh}) activation function of a fixed width. The output energy is obtained as a weighted sum of the MLP output and its negative gradient with respect to the input coordinates gives forces. The force matching error can be computed between the neural network prediction and the mapped CG force from all-atom reference and forms the loss function for training.

For capped alanine, we mostly followed the architectural choices of the original publication~\cite{wang2019machine}. The only difference we introduced was adding skip connections between the output of each layer except for the input and output layers. We found that such changes reduced number of the epochs necessary for training convergence, and the accuracy of resulted models are comparable with the reported behavior in Ref.~\citenum{wang2019machine} when trained on the ground truth forces from all-atom simulations. The choice for prior energy terms was kept intact:~harmonic potential terms were exerted on the bond and angle features, whose parameters were based on the ground-truth statistics. For a fair comparison, the same set of hyperparameters were used for both the conventional CGnet and the Flow-CGnet models.

For coarse graining of fast folding proteins, we found it necessary to introduce a few changes, such that the CGnet could correctly learn the free energy landscape from the flow models. These changes include relaxing the Lipschitz regularization strength, changing the activation function to Sigmoid Linear Unit (SiLU) as well as increasing the number of CGnet layers. For chignolin increasing the number of layers led to overfitting. Therefore, we stayed with 5 layers for this special case. 
Similar to Wang et al.'s experiments on CG chignolin, a repulsion term between the nonbonded bead pairs proved to be necessary for maintaining a resonable exclusion volume and excluding unphysical crashes~\cite{wang2019machine}. We found the numerical stability and the accuracy of estimated free energy surface from simulation were sensitive to the choice of function form as well as parameters for the repulsion. The final choice of repulsion turned out to be a $C^\infty$ piece-wise function:
\begin{equation}
    u_\text{repul}(\vec r_i, \vec r_j) = 
    \begin{cases}
        1600k_BT\cdot\left( d_{ij} - \sigma_{ij} \right)^2 & \text{if } j>i+1 \text{ and } d_{ij}<\sigma_{ij} \\
        0 & \text{otherwise}
    \end{cases},
\end{equation}
where $d_{ij}:=\left| \vec r_i - \vec r_j \right|$ and the endpoint $\sigma_{ij}=0.36\,\text{nm}$ for glycine-involved pairs and $0.42\,\text{nm}$ for all other pairs.
The potential is computed for every non-neighboring pairs in the CG molecule and the results are summed to give the repulsive prior term. Note that the same repulsion function was also used for reweighting flow samples, which is based on SI-Section B.4.

A list of concrete hyperparamter choices can be found in Table~\ref{tab:hyper-params-cgnet}.

\begin{table}\centering
\caption{Hyperparameters of CGnets in the experiments}
\label{tab:hyper-params-cgnet}
\begin{tabular}{l
>{\raggedleft\arraybackslash}p{0.2\textwidth}
>{\raggedleft\arraybackslash}p{0.1\textwidth}
>{\raggedleft\arraybackslash}p{0.1\textwidth}
>{\raggedleft\arraybackslash}p{0.12\textwidth}
>{\raggedleft\arraybackslash}p{0.2\textwidth}}
System & No.~of fully connected layers & Neurons in each layer & Activation function & Lipschitz regularization strength & Prior terms\\
\midrule
Capped-Alanine & 5 & 160 & \texttt{tanh} & 4 & Harmonic terms on bonds and angles\\
Fast-Folders & 5 (chignolin) / 8 (others) & 160 & \texttt{SiLU} & 10 & Harmonic terms on bonds and angles + repulsion\\
\bottomrule
\end{tabular}
\end{table}

\section{Additional details on the experiments}
\label{sec:details}
\subsection{All-atom simulation for capped alanine}
The training set for capped alanine was generated in house with conventional all-atom MD simulations in OpenMM~\cite{Eastman_PLoSComputBiol2017}. The simulation system was set up according to Ref.~\citenum{wang2019machine}. %
After equilibration at target temperature 300K for 10ns, the peptide coordinates and forces were recorded with a 2-ps interval. We performed four independent simulation runs starting from different initial structures. Each run is of length 500~ns, resulting in four times 250,000 sample points.

\subsection{Flow training, sample generation and post-processing}
For capped alanine, the batch size for flow model training was set to 256. ADAM optimizer~\cite{kingma2014adam} was used and the learning rate was set to $0.001$. We performed a four-fold cross validation by picking each trajectory in turn for validation, while using samples from the remaining three for training. In order to evaluate the scaling behavior of models with respect to the numbers of available training samples, models were obtained from training sets with different sizes, ranging from the maximum available $750,000$ down to $10,000$. The subsampling was done with random sampling.
Training was performed for different number of epochs with respect to the training set sizes:
\begin{itemize}
    \item 750,000 and 500,000: 30 epochs,
    \item 200,000: 75 epochs,
    \item others: 100 epochs.
\end{itemize}
For the fast folding proteins, the batch size was 128 and the maximum epoch number was uniformly 50.

The negative log likelihood of the flow on validation set is computed after each training epoch. Convergence in the validation loss was observed for all cases. The set of weights corresponding to the lowest validation loss was saved as checkpoint and 1,048,576 samples were generated with the best model and forces are calculated accordingly.

For capped alanine, we chose to discard samples whose corresponding forces exceed $\sqrt{1.5\times10^5}k_BT/\text{nm}$ in magnitude, which is defined as $\sqrt{\frac{1}{N}\sum_i\lVert \mathbf f \rVert_2^2}$. For fast folding proteins, we filtered the samples with an upper limit for force magnitude of $\sqrt{8\times10^4}k_BT/\text{nm}$, and then computed the weights for the remaining samples according to the repulsion term introduced in SI-Section C.2 and expression in B.4. The weights were used to compute the statistics for defining the harmonic prior terms for Flow-CGnets and the weighted force-matching error for training and validation. A summary of the effect of both post-processing steps on the final effective train set size can be found in Section S.7.

\subsection{Flow-CGnet training}
The CGnet training for capped alanine used ADAM optimizer as well. The batch size was 128 and initial learning rate was $0.003$. An exponential decay was applied on the learning rate every 5 epochs, such that the target learning rate $10^{-5}$ was reached in 50 epochs. The incoming flow samples were randomly shuffled and then divided according 80\%--20\% ratio for training and validation sets. Model checkpoints were saved every two epochs at the epochal end, from which the one with minimum force matching loss on the validation set was used for later simulation.
The training of Flow-CGnet for fast folders followed essentially the same setup, expect that the learning rate decayed every 15 epochs over 75 training epochs in total, and the loss calculation was based on repulsion-reweighting.

\subsection{CGnet simulations}
The friction constant for CG Langevin dynamics is set to 1~$\text{ps}^{-1}$. 
For capped alanine, we performed 2-ns parallel tempering (PT) simulations for both Flow-CGnet and conventional CGnet models.
For fast folding proteins, we performed both 50-ns PT and 50-ns normal Langevin simulations. 
The exchange-proposing interval for PT simulations is 2~ps for capped alanine and 5~ps for fast folding proteins.
In simulations, 100 independent trajectories were obtained in parallel for each molecule, so as for accumulating sufficient samples with reduced computational overhead for each time step. 

\subsection{tICA coordinates}
For a low-dimensional comparison between coarse-grained and atomistic models, we performed time-lagged independent component analysis (tICA) \cite{Naritomi2011,Schwantes_JChemTheoryComput2013,Perez_JChemPhys2013} using deeptime \cite{Hoffmann2022deeptime}. As features, we computed all pairwise distances between C$_\alpha$ atoms as well as all dihedral angles between every four consecutive C$_\alpha$ atoms in a protein chain. Time-lagged feature covariances were computed using a lagtime of 20 ns (100 frames) for each set of trajectories from \cite{Lindorff_Science2011}. This choice of lag time enables recognizing all mean transition path times, which range from 40 to 700 ns for the considered trajectories. The data was projected on the two slowest modes (corresponding to the highest eigenvalues) for plotting and further analysis.

\subsection{Scaling of the flow and Flow-CGnet modeling accuracy versus the amount of samples used for training}

During the two-stage model training of Flow-CGnets, there are two factors that contribute to the data efficiency:~the number of all-atom conformations used as a reference to train the flow model, and the number of CG samples generated by the flow for training the secondary CGnet model.
These two factors also directly affect the training time as well as computational complexity.
This section is intended to provide some insights to distinguish the importance of these two factors and help to identify the real advantage of the Flow-CGnet over conventional CGnet approach.

For the sake of clarity, we will use $N_1$ to represent the number of samples (coordinates only) for training the flow, which are ultimately from the all-atom MD simulation. $N_2$ represents the number of flow samples (coordinates and forces) generated and used in the subsequent training of Flow-CGnet. For the conventional CGnet, we use $N_0$ to represent the number of CG-mapped all-atom coordinates and forces for the training process.
As for the accuracy of the trained models, we use the same two criteria as we used in the main text for capped alanine example:~the KL divergence between the equilibrium distributions of the CG model (\texttt{KL}) and mapped all-atom references, as well as the mean-squared-error of free energy (\texttt{MSE}).
Both criteria are evaluated on representative 2D reaction coordinates that can distinguish all metastable states in the all-atom simulation.
In general, the same reaction coordinates are chosen as in the main text.
For capped alanine, it is the Ramachadran plot, i.e., the dihedral angle $\phi$ and $\psi$.
For chignolin, we use the first two TICs from TICA analysis over CG-mapped all-atom trajectories.
For each choice of $N_0$ as well as each combination of $N_1$ and $N_2$, we evaluate~\texttt{KL} and \texttt{MSE} over four models obtained from a four-fold cross validation experiment and report the mean and (sample) standard deviation.

\begin{figure}
    \centering
    \includegraphics[width=\textwidth]{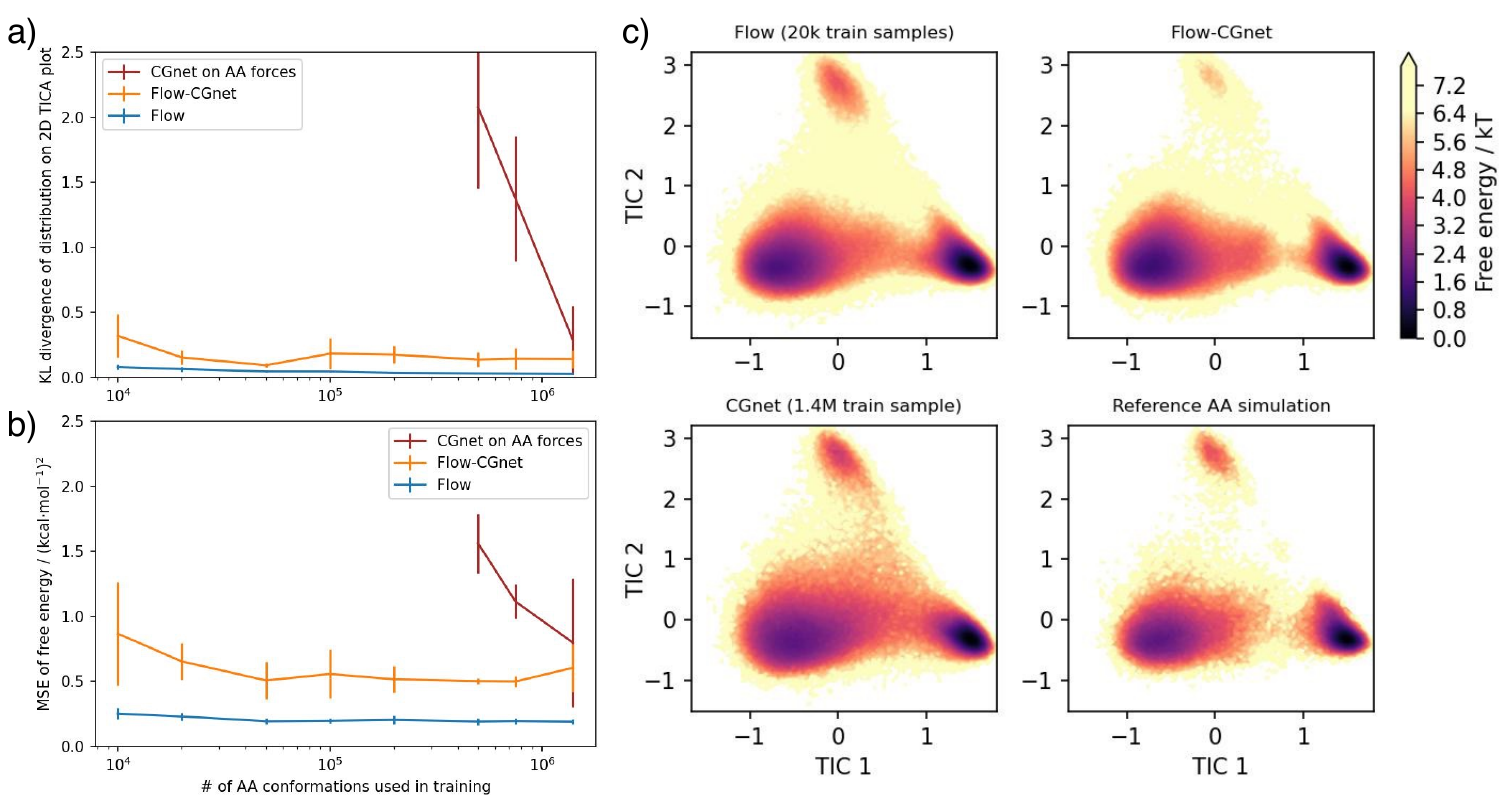}
    \caption{
    Data efficiency of coarse-grained Chignolin that was simulated with CHARMM22* on custom hardware. Note that in contrast to the simulations on the Anton supercomputer, these simulations explore a third (misfolded) state (Panel c, top). 
    \textbf{a)} KL divergence and \textbf{b)} mean square error between coarse-grained and all-atom free energies on the first two time-lagged independent components (TICs). The errors for CGnet trained directly with force matching (brown) are only shown for amounts of data for which the training process successfully converged.
    \textbf{b)}
    Free energy as a function of the first two TICs, learned by the Flow from 20,000 training samples (top left), the Flow-CGnet trained by this flow (top right), CGnet trained directly with force matching using the full dataset ($1.4 \cdot 10^6$ samples, bottom left), and the reference all-atom data (bottom right). The pure flow free energy outperforms CGnet using any tested amount of samples, Flow-CGnet is on par with CGnet when only trained on a factor of 70 less data.
    }
    \label{fig:cln-scale}
\end{figure}

In Fig.~\ref{fig:cln-scale}, we compare the scaling behavior of flow and Flow-CGnet models with conventional CGnets over the smallest fast folding protein chignolin.
Because the original DESRES data does not directly comprise forces and the water coordinates necessary for evaluating the forces are missing, we used a home-brew data set from all-atom MD simulation with a similar force field and simulation setup~\cite{wang2019machine, husic2020coarse}. 
The Flow-CGnet architectures and hyperparameters stay the same as the one used for chignolin in the main text.
Here we report the scaling of~\texttt{KL} and \texttt{MSE} according to different $N_0$s for conventional CGnet and $N_1$s for flow and Flow-CGnet (Fig.~\ref{fig:cln-scale}a and b).
As the curves indicate, the flow as well as Flow-CGnet can already reach rather good accuracy when exposed to very limited amount of reference conformations, while for traditional force-matching it takes almost the whole data set to reach a similar performance.
This can be validated by the 2D free energy surface in Fig.~\ref{fig:cln-scale}c. For the flow and Flow-CGnet, distributions are plotted according to equilibrium samples, where the flow was trained over 20,000 conformations mapped from all-atom.
For comparison, there are also results from full-power conventional CGnet as well as the reference surface.

\begin{figure}
    \centering
    \includegraphics[width=\textwidth]{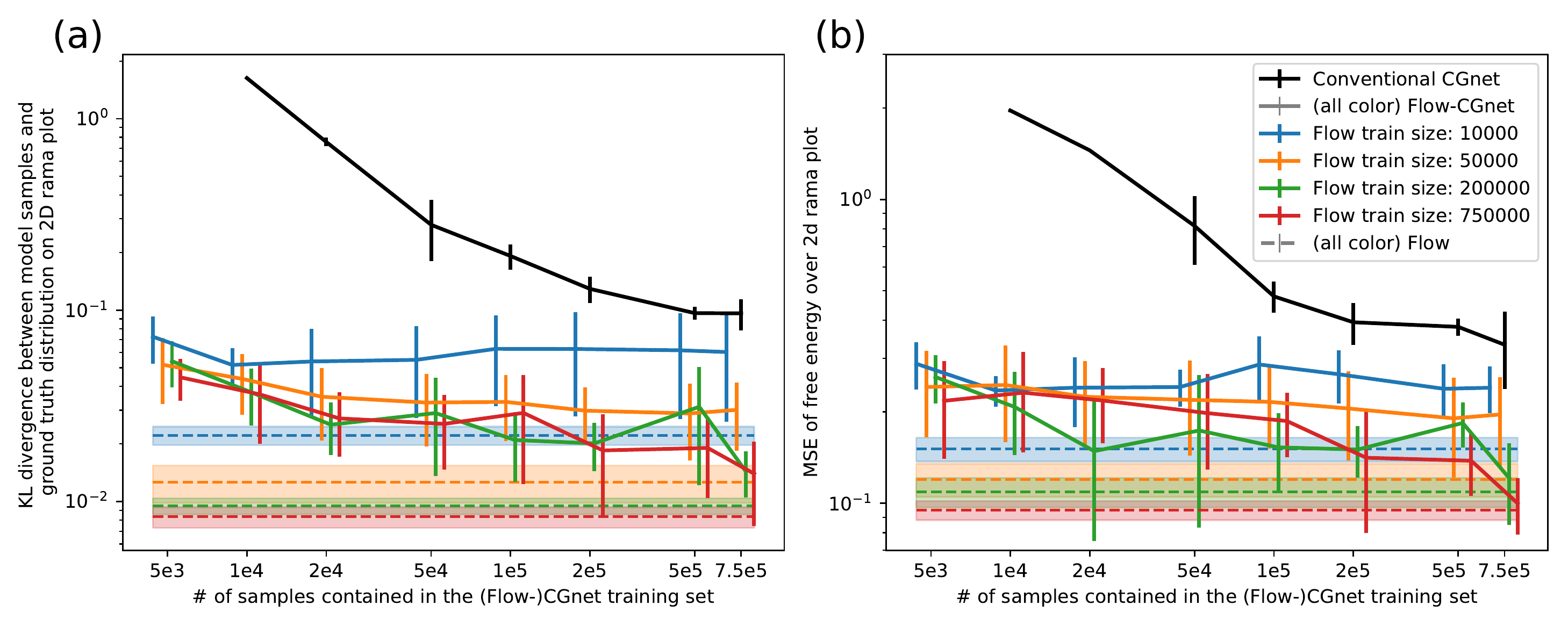}
    \caption{Accuracy of flow, Flow-CGnet as well as conventional CGnet in modeling CG capped alanine as a function of number of training samples fed to the CGnet or Flow-CGnet models. \textbf{a)} the KL divergence of distributions on the 2D Ramachandran plots of the model samples and the ground truth all-atom samples; \textbf{b)} the mean-squared-error of the 2D free energy. The black solid line shows the performance of conventional CGnets, while the colored solid curves for the Flow-CGnets. As a reference lower bound for both criteria, we include the performance of the corresponding flow for generating samples (colored dashed lines). Each value for plotting is averaged over a four-fold cross validation run, and the error bar as well as filled regions indicate the sample standard deviation.}
    \label{fig:ala2-scale}
\end{figure}

Figure~\ref{fig:ala2-scale} shows how model accuracy is related to $N_0$ or the combination of both $N_1$ and $N_2$ for modeling the capped alanine. The scaling curve for conventional CGnet (black) is identical to the corresponding curve in Fig.~3c, i.e., the x-axis represents $N_0$. As the the flow and Flow-CGnet, the color of the curves represents the value for $N_1$, while the x-axis represents $N_2$. While the plot here displays the same information as in main text for flow models, it delivers more information for the Flow-CGnet models than the Fig.~3c, where only the effect of $N_1$ is visualized (while $N_2=7.5\times10^5$ being maximum).
We observed the following phenomena:
\begin{enumerate}
    \item For the Flow-CGnet with fixed $N_1$ (i.e., trained against the same flow model), the accuracy generally improves when $N_2$ goes up (i.e., when more flow samples are provided), except when the flow is trained with very few all-atom conformations. 
    \item For the Flow-CGnet with fixed $N_2$ (i.e., same number of CG coordinates and forces from the flow), the accuracy generally improves when $N_2$ goes up (i.e., when the flow is trained on more reference conformations). This phenomenon becomes more obvious when $N_2$ is sufficiently large.
    \item Flow-CGnet trained with the least amount of flow samples ($N_2$) from the flow trained with the least amount of reference conformations ($N_1$) has already slightly better performance than the conventional CGnet with maximum available all-atom information ($N_0$).
\end{enumerate}
Based on these, we postulate that the main advantage of flow-matching is that the flow provides ``better'' CG forces for the CGnet training. 
This could be because the flow CG forces have much less noise than the mapped all-atom forces.
In the meantime, the availability of essentially infinite flow samples helps the CGnet to better exploit the high accuracy of flow CG modeling via force-matching.

\subsection{The effect of post-processing on the sizes of flow samples}
\begin{figure}
    \centering
    \includegraphics[width=0.5\textwidth]{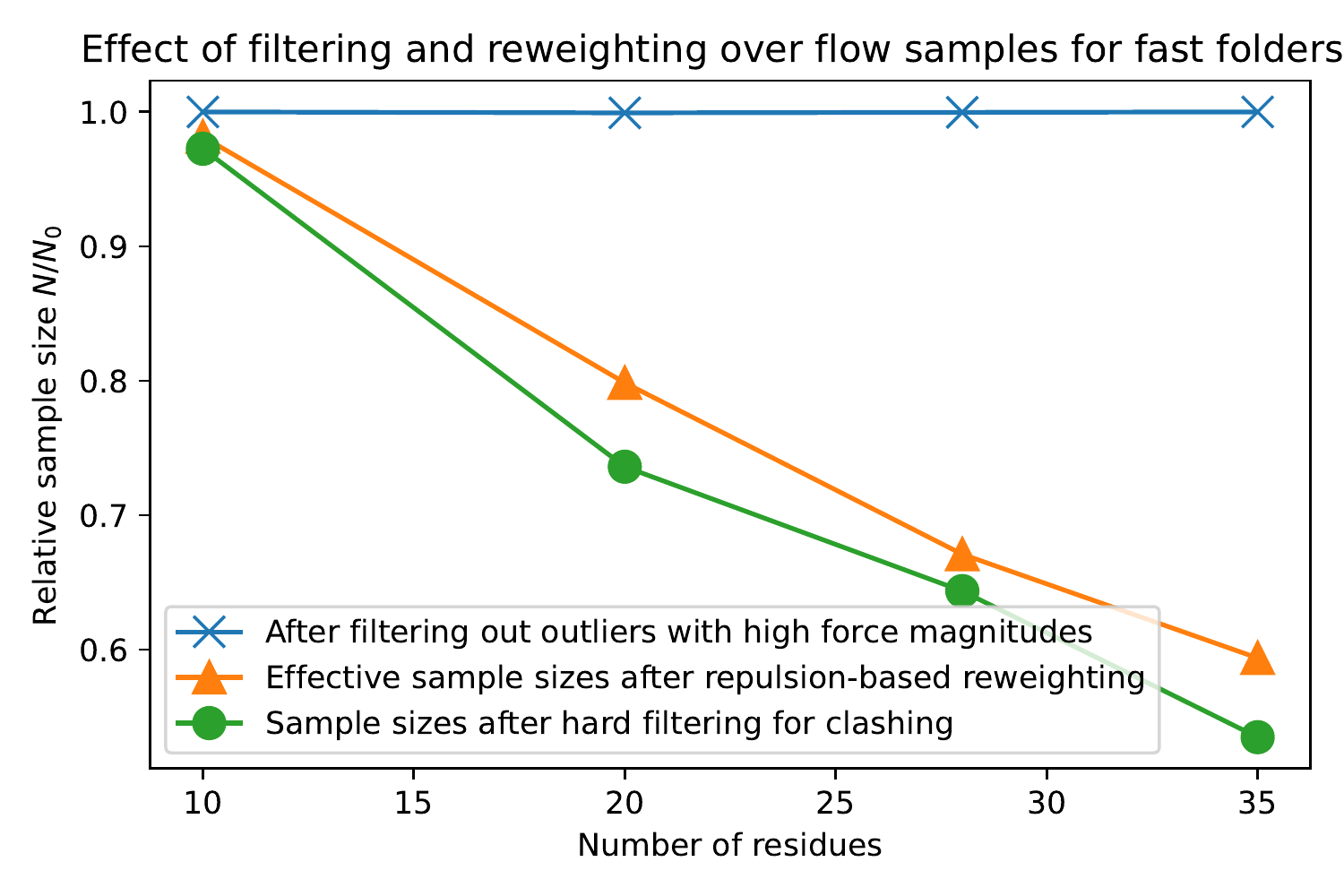}
    \caption{Fraction of remaining flow samples as a function of length of the modeled fast folding proteins after two stages of post-processing. The four columns of data points from left to right correspond to chignolin, trpcage, bba and villin, respectively. Three curves indicate the remaining sample size after different post-processing steps:~The common first step is to discard samples with high force magnitudes (resulting in the blue curve); subsequently, the samples are either reweighed according to the repulsion term (the orange curve) or simply filtered to remove samples with non-neighboring residue pairs with alpha-carbon atoms closer than 4 Angstroms.}
    \label{fig:ff-filtering}
\end{figure}

Figure~\ref{fig:ff-filtering} demonstrates the fraction of samples that have been filtered out during the post-processing step for fast folders. The x-axis shows the number of residues, i.e., the length of the modeled protein chain, while the y-axis indicates the fraction of samples remained after the processing steps. The first step, i.e., filtering out samples with extraordinarily large force magnitude only affected very few sample points, which did not scale with the chain length (Fig.~\ref{fig:ff-filtering} blue curve with cross markers). The second step, i.e., reweighing samples according to the repulsion term, affected larger fraction of samples as the chain length grew. This phenomenon is demonstrated by the orange curve with triangular markers, which plots the effective sample size according to the conventional definition \begin{equation}
    N_\text{eff}=\frac{(\sum w_i)^2}{\sum{w_i^2}}.
\end{equation}
The decrease of effective sample size can be explained by the fact that more samples tend to contain spatial clashes when the flow models a longer chain. This is indicated by the green curve with circular markers, which shows the fraction of remaining samples after an alternative secondary filtering process by simply discarding outlier samples with at least one pair of non-neighboring residues closer than 4 Angstroms.

\subsection{Comparison of representative structures}

\begin{figure}
    \centering
    \includegraphics[width=\textwidth]{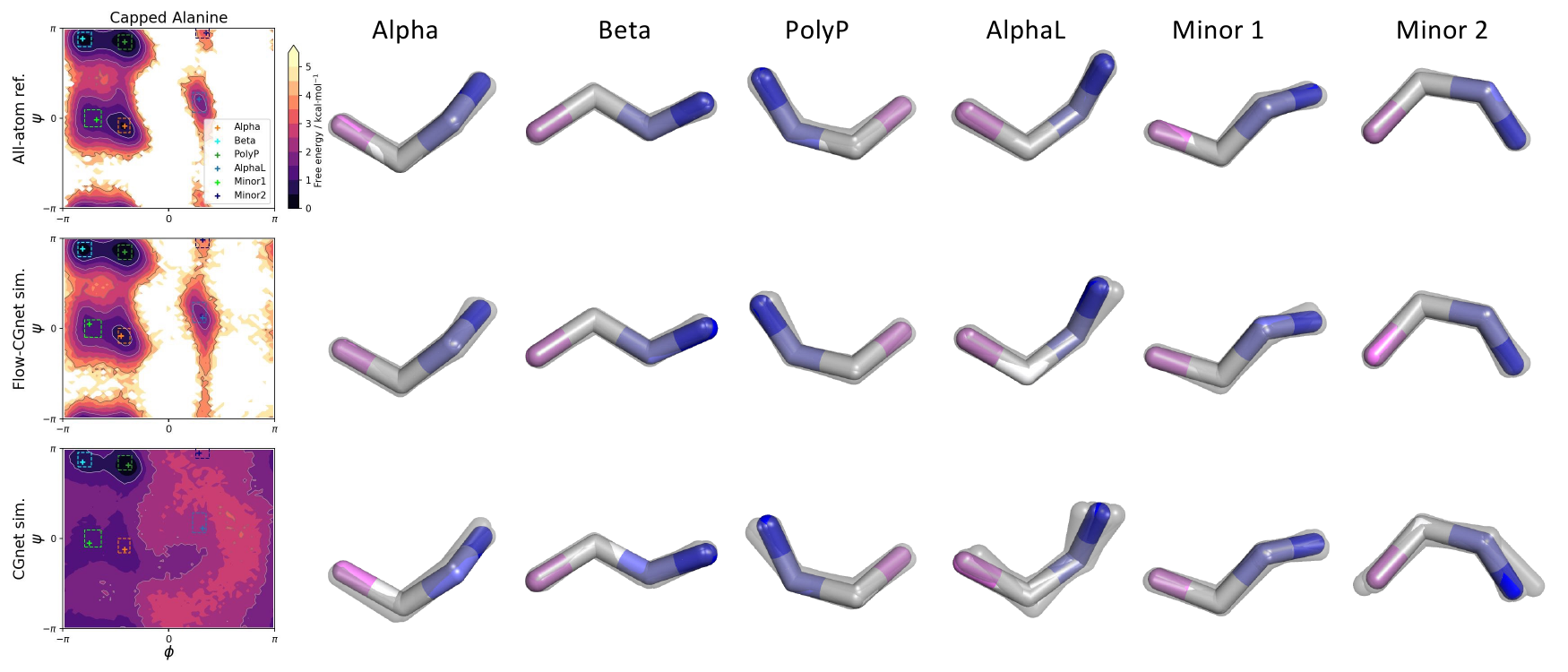}
    \caption{Comparison of structural ensembles corresponding to the different minima in the free energy landscape of capped alanine. The leftmost panel shows 2D free energy as in the main text, as well as the six regions (dashed rectangles) enclosing local free energy minima in the all-atom cases. Columns on the right display the representative structures. The three rows from top to bottom correspond to the all-atom reference, the Flow-CGnet simulation as well as the conventional CGnet simulation output. The highlighted structures in blue-white-magenta color scheme represent the conformations found at the very bottom of each basin (or simply from the bin with lowest free energy when there is no basin, e.g., for conventional CGnet), which correspond to the plus markers in the left most plot. Additionally, 10 randomly picked structures from the rectangle are shown as overlay in half-transparent gray.
    }
    \label{fig:ala-structures}
\end{figure}

Here we compare the representative structures from the energy minima observed in both all-atom reference and CG simulations. Figure~\ref{fig:ala-structures} shows the representative conformations for capped alanine from the six free energy basins on the Ramachandran plot. Since the conformations of this simple molecule are almost exclusively determined by the two dihedral angles, the structures from the same 2D region are almost identical. The Flow-CGnet outperforms the conventional CGnet mostly on the correct depth and boundary of the basins.

\begin{figure}
    \centering
    \includegraphics[width=\textwidth]{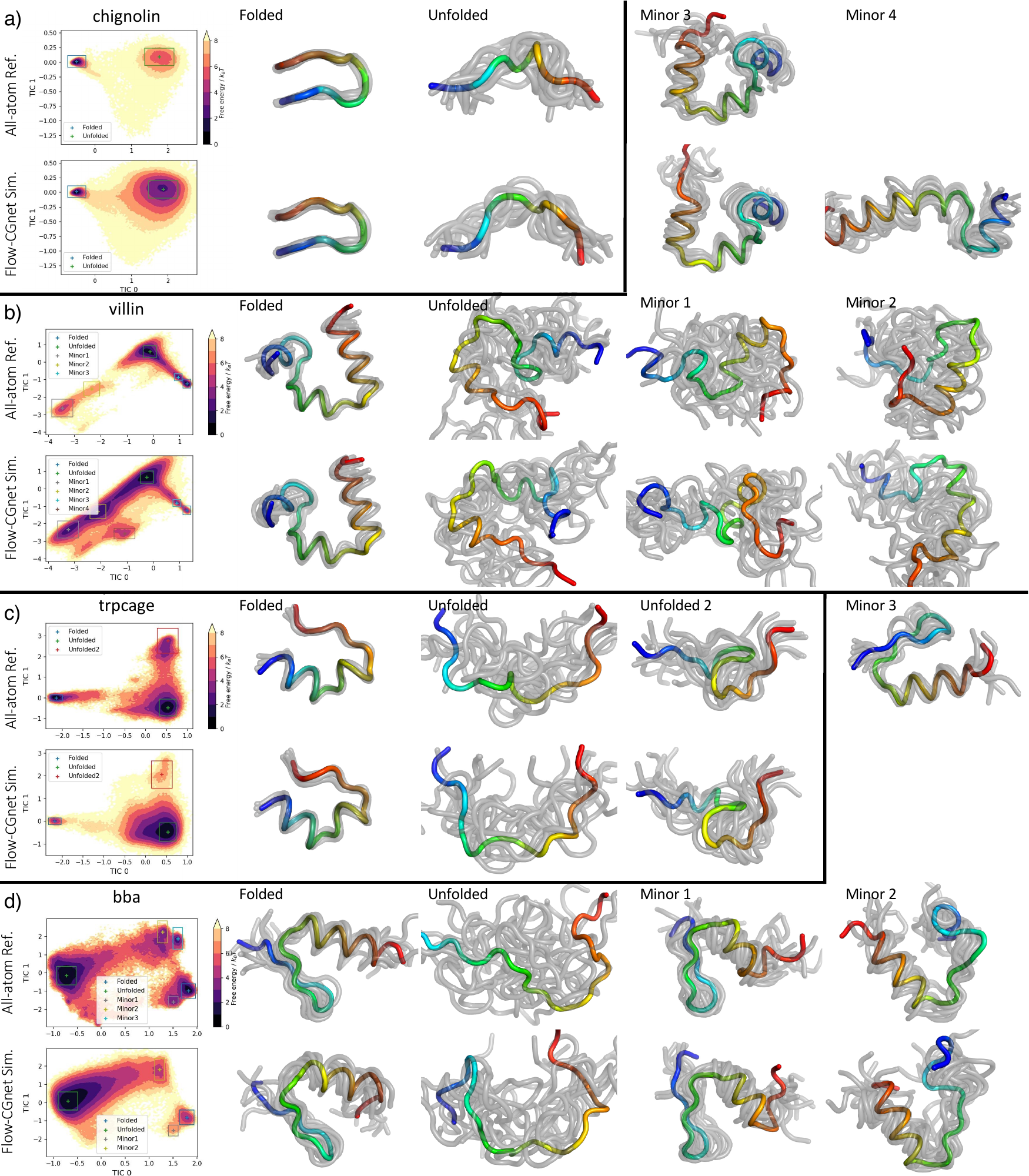}
    \caption{Comparison of structural ensembles corresponding to the different minima in the free energy landscape of the proteins considered in this study: a) chignolin; b) villin; c) trpcage; d) bba. The illustration generally follows the convention of Fig.~\ref{fig:ala-structures}, except that the locations of rectangles are not predetermined, but they are rather placed so as to keep the found free energy minima at their centers.
    }
    \label{fig:ff-structures}
\end{figure}

Figure~\ref{fig:ff-structures} illustrates the representative structures corresponding to the free energy minima for the fast-folding proteins. Unlike the plot for capped alanine, a small region on the 2D free energy landscape can map to either a well-defined (e.g., the native fold) or a drastically heterogeneous ensemble (especially for the unfolded state) of structures. For chignolin, trpcage and most minima of villin and bba, the conformational ensembles from the CG model agree well with the reference. For villin, the state labeled as "Minor 4", is only observed in the Flow-CGnet samples. Its most distinct deviation from the folded state lies in the missing kink (colored lime-yellow). For bba, the Flow-CGnet does not significantly populate the state labeled as "Minor 3". This is a misfolded state in the all-atom MD simulation, featuring a register shift in the beta-strand from the native fold. The difference consists of the location of the beta-turn as well as the relative position of the N-tail (blue) and the beta-alpha junction (green). The reason for the missing structural element in villin as well as the alternative beta-strand in bba could be an underestimated relative stability of these states in the Flow-CG, which is observed also between the folded and unfolded states in general.

%%%%%%%%%%%%%%%%%%%%%%%%%%%%%%%%%%%%%%%%%%%%%%%%%%%%%%%%%%%%%%%%%%%%%
%% The appropriate \bibliography command should be placed here.
%% Notice that the class file automatically sets \bibliographystyle
%% and also names the section correctly.
%%%%%%%%%%%%%%%%%%%%%%%%%%%%%%%%%%%%%%%%%%%%%%%%%%%%%%%%%%%%%%%%%%%%%
%\bibliography{bibtex/achemso-demo, bibtex/refs-impl-solv}
\providecommand{\latin}[1]{#1}
\makeatletter
\providecommand{\doi}
  {\begingroup\let\do\@makeother\dospecials
  \catcode`\{=1 \catcode`\}=2 \doi@aux}
\providecommand{\doi@aux}[1]{\endgroup\texttt{#1}}
\makeatother
\providecommand*\mcitethebibliography{\thebibliography}
\csname @ifundefined\endcsname{endmcitethebibliography}
  {\let\endmcitethebibliography\endthebibliography}{}